
\documentclass[a4paper,fleqn]{cas-dc}


\usepackage[numbers]{natbib}
\usepackage{multirow}
\numberwithin{equation}{section}
\def\tsc#1{\csdef{#1}{\textsc{\lowercase{#1}}\xspace}}
\tsc{WGM}
\tsc{QE}
\usepackage{xcolor}

\newtheorem{theorem}{Theorem}

\newtheorem{proposition}{\textbf{Proposition}}

\newdefinition{remark}{Remark}
\newproof{proof}{Proof}
\newproof{pot}{Proof of Theorem \ref{thm}}
\usepackage{float}

\begin{document}
\let\WriteBookmarks\relax
\def\floatpagepagefraction{1}
\def\textpagefraction{.001}
\let\printorcid\relax 
\shorttitle{Boundary observer-based control of high-dimensional semilinear heat equations}

\shortauthors{Wang \& Fridman}  

\title [mode = title]{Constructive boundary observer-based control of high-dimensional semilinear heat equations}  

\tnotemark[1] 

\tnotetext[1]{This work was supported by Azrieli International Postdoctoral Fellowship, Israel Science Foundation (grant no. 673/19), the ISF-NSFC joint research program (grant no. 3054/23),  Chana and Heinrich Manderman Chair on System Control at Tel Aviv University, and the State Scholarship Fund of China Scholarship Council (grant no. 202306030133).}

%

\author{Pengfei Wang}



\ead{wangpengfei1156@hotmail.com}



\affiliation{organization={School of Electrical \& Computer Engineering, Tel-Aviv University},
            city={Tel Aviv},
            postcode={6997801}, 
            country={Israel}}

\author{Emilia Fridman}


\ead{emilia@tauex.tau.ac.il}







\begin{abstract}
This paper presents a constructive finite-dimensional output-feedback design for semilinear $M$-dimensional ($M\geq 2$) heat equations with boundary actuation and sensing. 
A key challenge in high dimensions is the slower growth rate of the Laplacian eigenvalues. 
The novel features of our modal-decomposition-based design, which allows to enlarge Lipschitz constants, include a larger class of shape functions that may be distributed over a part of the boundary only,  the corresponding lifting transformation and the full-order controller gain found from the design LMIs. 
We further analyze the robustness of the closed-loop system with respect to either multiplicative noise (vanishing at the origin) or additive noise (persistent).  
Effective LMI conditions are provided for specifying the minimal observer dimension and maximal Lipschitz constants that preserve the stability (mean-square exponential stability for multiplicative noise and noise-to-state stability for additive noise). Numerical examples for 2D and 3D cases demonstrate the efficacy and advantages of our method.
\end{abstract}




\begin{keywords}
High dimensional PDEs \sep stochastic heat equation \sep nonlinearity \sep boundary control \sep Lyapunov method.
\end{keywords}

\maketitle


\section{Introduction}

The S-procedure, introduced by Yakubovich \cite{yakubovich1971s}, along with Linear Matrix Inequalities (LMIs) inspired by his works (e.g., \cite{Yakubovich1962}), have become fundamental tools in robust control for both finite and infinite-dimensional systems. In this paper, we exploit these tools for finite-dimensional output-feedback control of semilinear heat equations, avoiding spillover effects induced by nonlinearities and output residue, and simultaneously showing robustness with respect to stochastic perturbations.

Finite-dimensional output-feedback controllers for PDEs are attractive in practical applications.
Such controllers were designed by the modal decomposition method and have been extensively studied since the 1980s \cite{balas1988finite,curtain1982finite,harkort2011finite}. The recent paper \cite{katz2020constructive} suggested a constructive LMI-based finite-dimensional observer-based control for 1D linear parabolic PDEs. The latter result was extended to semilinear parabolic PDEs \cite{katz2022globalscl}. 
In \cite{wang2023constructive}, we suggested a constructive finite-dimensional output-feedback control for 1D linear parabolic stochastic PDEs, extending results of \cite{katz2020constructive} to the stochastic case.
Motivated by \cite{wang2023constructive},  the almost sure exponential stability was studied in 
\cite{shang2025finite}. 
However, the above results are confined to 1D PDEs whereas the controller and observer gains have a reduced number of non-zero entries that correspond to the number of the unstable mode. The latter may be restrictive.

Control of high-dimensional PDEs became an active research area, with promising applications in engineering.
Observer-based boundary control for high-dimensional parabolic PDEs was explored in \cite{jadachowski2015backstepping,vazquez2016explicit} by backstepping method and in \cite{feng2022boundary,meng2022boundary} by modal decomposition approach. Note that the observers in \cite{feng2022boundary,jadachowski2015backstepping,meng2022boundary,vazquez2016explicit} were designed in the form of PDEs. 
The finite-dimensional boundary state-feedback stabilization of 2D linear heat equations was studied in \cite{vazquez2025backstepping} by combining the backstepping with Fourier analysis.

For high-dimensional linear parabolic PDEs with bounded control and observation operators, finite-dimensional output-feedback was suggested in \cite{djebour2024observer,sakawa1983feedback} without providing constructive bounds on the observer dimension. 
In \cite{wang2024auto}, we suggested the first constructive finite-dimensional output-feedback design for 2D linear heat equations where at least one of the observation or control operators is bounded. 
In \cite{wang2025sampled}, we explored the boundary state-feedback control for 2D semilinear stochastic PDEs by extending the lifting transformation in \cite{karafyllis2021lyapunov} to the 2D case. 
 The finite-dimensional control for 2D and 3D linear heat equation under boundary actuation and point measurements was studied in \cite{lhachemi2023boundary}. However, the LMI conditions provided for determining the observer dimension were conservative, and the numerical example failed to demonstrate their feasibility. 
 Furthermore, in \cite{lhachemi2023boundary}, a scaling technique was used to guarantee the stability feasibility for large enough observer dimension. This technique cannot guarantee the robustness with respect to stochastic perturbations (see Remark \ref{remark9}). Consequently, the development of a constructive finite-dimensional output-feedback control with an efficient design for $M$-dimensional ($M\ge 2$) PDEs and the extension to  the stochastic case remain an open challenge. 
 A slower growth rate of the eigenvalues of the high-dimensional Laplacian complicates the design and the feasibility analysis.

This paper for the first time presents a constructive and effective finite-dimensional output-feedback design for semilinear $M$-dimensional heat equations under boundary actuation and sensing. The nonlinearities satisfy the global Lipschitz condition. 
To address the challenge of slower growth rate of the Laplacian eigenvalues of high dimensions, we provide a rigorous estimate for the output residue and employing the S-procedure for spillover avoidance. The novel features of our modal-decomposition-based design include a larger class of shape functions that may be distributed over a part of the boundary only, the corresponding lifting transformation and the full-order controller gain found from the design LMIs. 
Our design essentially enlarges the admissible Lipschitz constants of the nonlinearities. 
We further analyze the robustness of the closed-loop system against both multiplicative noise (vanishing at the origin) and additive noise (persistent). 
By applying a direct Lyapunov method and using It\^{o}'s formulas, we establish mean-square exponential stability for multiplicative noise and the noise-to-state stability (NSS) for additive noise. The latter has not been studied for stochastic PDEs and cannot be studied under the input-to-state stability (ISS) umbrella since the stochastic integral against Brownian motion has infinite variation, whereas the integral of a legitimate input for ISS must have finite variation (see \cite{mateos2014p}).
Numerical examples for 2D and 3D cases demonstrate the efficacy of our method. The main contribution of this paper is listed as follows: 
\begin{itemize}
  \item For the first time, we study the finite-dimensional observer-based control of $M$-dimensional ($M\ge 2$) PDEs under boundary actuation and sensing, providing constructive LMI conditions for finding observer dimension. This is also the first paper on the control of stochastic PDEs (for $M\ge 1$) with NSS analysis.
  \item Differently from \cite{barbu2013boundary,lhachemi2023boundary,meng2022boundary,munteanu2017stabilisation,wang2025sampled}, where shape functions are restricted to eigenfunctions distributed over the entire boundary, we introduce a larger class of shape functions that may be distributed over a part of the boundary making design more flexible. Moreover, the latter leads to the corresponding lifting transformation that significantly improves the upper bounds on the Lipschitz constants and noise intensities (even in the state-feedback boundary control). 
    \item Compared to  \cite{katz2020constructive,katz2022globalscl,wang2023constructive} for 1D case where the controller gain was designed from the comparatively unstable modes,  this paper considers a full-order controller gain found from the design LMIs, essentially improving the upper bounds on the noise intensities and Lipschitz constants.
\end{itemize}

Preliminary results on the mean-square exponential stabilization
of 2D linear parabolic stochastic PDEs under the boundary actuation and sensing via the shape functions from \cite{meng2022boundary} were reported in CDC 2024 \cite{pengfei2024CDC}.

\textit{Notations:}
Let $(\Omega,\mathcal{F},\mathbb{P})$ be a complete probability space with a filtration $\{\mathcal{F}_{t}\}_{t\geq 0}$ of increasing sub $\sigma$-fields of $\mathcal{F}$ and let $\mathbb{E}\{\cdot\}$ be the expectation operator.
For bounded domain $\mathcal{O}\subset \mathbb{R}^{M}$, denote by $L^{2}(\mathcal{O})$ the space of all square integrable functions $f: \mathcal{O}\rightarrow \mathbb{R}$ with inner product $\langle f,g \rangle_{\mathcal{O}}=\int_{\mathcal{O}}f(x)g(x)\mathrm{d}x$ and induced norm $\|f\|_{L^{2}}^{2}=\langle f,f \rangle_{\mathcal{O}}$.  $H^{1}(\mathcal{O})$ is the Sobolev space of functions $f:\mathcal{O}\longrightarrow \mathbb{R}$ with a square integrable weak derivative.
Let $\frac{\partial}{\partial \nu}$ be the normal derivative. 
The Euclidean norm is denoted by  $|\cdot|$. For  $P\in \mathbb{R}^{n\times n}$, $P>0$ means that $P$ is symmetric and positive definite. The symmetric elements of a symmetric matrix will be denoted by $*$. For $0<P\in \mathbb{R}^{n\times n}$ and  $x\in \mathbb{R}^{n}$, we write $|x|^{2}_{P}=x^{\mathrm{T}}Px$. Denote $\mathbb{N}$ by the set of positive integers. 
For $A\in \mathbb{R}^{n\times n}$, the operator norm of $A$, induced by $|\cdot|$, is denoted by $\| A \|$. 
Let $I$ be the identity matrix with appropriate dimension and Id be the identity operator in $L^2(\mathcal{O})$. 

\section{Mean-square exponential stability for multiplicative noise}
\subsection{System under consideration}
Consider the stochastic semilinear heat equation:
\begin{equation}\label{eq14neumannrd}
\setlength{\arraycolsep}{1pt}		\begin{array}{ll}
		\mathrm{d} z(x,t)=[\Delta z(x,t)+qz(x,t)+f(x,z(x,t))]\mathrm{d}t \\
		~~~~~~~~~~~ +g(x,z(x,t))\mathrm{d}\mathcal{W}(t), ~~ t\geq 0, ~x\in \mathcal{O}, \\
\frac{\partial z(x,t)}{\partial \nu}=u(x,t), x\in\Gamma_{1},~ z(x,t)=0, ~x\in\Gamma_{2}, \\
		z(x,0)=z_0(x),
	\end{array}
\end{equation}
where $q\in\mathbb{R}$ is a reaction coefficient, $z_0\in L^2(\mathcal{O})$ is the initial value, and $\Delta z(x,t)=\sum_{j=1}^{M}\partial_{x_j}^2 z(x,t)$ denotes the Laplacian. 
Similarly to \cite{meurer2011flatness,meurer2009trajectory}, we consider the rectangular computational domain $\mathcal{O}=[0,a_1]\times \dots \times [0,a_M] \subset\mathbb{R}^M$ ($M\ge 2$) and consider one surface $\Gamma_1=\{(x_1,\dots,x_{M-1},0), 0\leq x_k\leq a_k, k=1,\dots,M-1\}$ for the boundary input and $\Gamma_2= \partial \mathcal{O} \backslash \Gamma_1$ for homogeneous boundary conditions. 
In \eqref{eq14neumannrd},  the noise term $g(x,z)\mathrm{d}\mathcal{W}(t)$ can be interpreted as the random perturbations of $f(x,z)\mathrm{d}t$. Here  $\mathcal{W}(t)$ is a standard one-dimensional Brownian motion defined on $(\Omega, \mathcal{F}, \mathbb{P})$. 
The nonlinear functions $f,g: \mathcal{O} \times \mathbb{R}\rightarrow \mathbb{R}$ satisfy
\begin{equation}\label{sigma0}
	\begin{array}{ll}
f(x,0)= 0, ~~g(x,0)= 0, \\
  |f(x,z_{1})- f(x,z_{2})| \leq \sigma_{f} |z_{1}-z_{2}|,\\
  |g(x,z_{1})- g(x,z_{2})| \leq \sigma_{g} |z_{1}-z_{2}|,~z_{1},z_{2}\in \mathbb{R},
\end{array}
\end{equation}
for all $x\in \mathcal{O}$, with some Lipschitz constants $\sigma_{f},\sigma_{g}>0$.
We consider the boundary measurement: 
\begin{equation}\label{measurement}
	\begin{array}{ll}
			 y(t)= \mathrm{col}\{ \langle c_m, z(\cdot, t) \rangle_{\Gamma_1} \}_{m=1}^{d}, \\
 c_m \in L^{2}(\Gamma_{1}), m=1,\dots,d,
	\end{array}
\end{equation}
where $\langle c_m, z(\cdot, t) \rangle_{\Gamma_1}=\int_{\Gamma_1} c_m(x) z(x, t) \mathrm{d} x$.
For simplicity, we consider deterministic measurement. As in \cite{wang2024SICON,wang2023constructive}, our results can be easily extended to a noisy measurement.

\begin{remark}
Using the transformation (52)-(54) from \cite[Appendix A]{meurer2009trajectory}, the following general parabolic system can be converted into the standard form \eqref{eq14neumannrd}: 
\begin{equation*}
	\begin{array}{ll}
	\mathrm{d} z(x,t)=[	\sum_{j=1}^{M}\alpha_j(x_j)\partial_{x_j}^2z(x,t)+\sum_{j=1}^{M}\beta_j(x_j)\partial_{x_j}z(x,t) \\
		~~~~~~~~~ +q(x)z(x,t)+f(x,z(x,t))]\mathrm{d}t\\
		~~~~~~~~~ +g(x,z(x,t))\mathrm{d}\mathcal{W}(t), ~~ t\geq 0, ~x\in \mathcal{O}, \\
\frac{\partial z(x,t)}{\partial \nu}=u(x,t), x\in\Gamma_{1}, ~   z(x,t)=0, ~x\in\Gamma_{2}.
	\end{array}
\end{equation*}
For this transformation to be valid, the parameters must satisfy the following regularity and boundedness conditions: $\alpha_j \in \mathcal{C}^2([0, a_j])$; $\beta_j \in \mathcal{C}^1([0, a_j])$; and $0 < \alpha_j^l \leq \alpha_j(x_j) \leq \alpha_j^u$, $0\leq \beta_j(x_j) \leq \beta_j^u$ for all $x_j \in [0, a_j]$; and $q^l \leq q(x) \leq q^u$ for all $x \in \bar{\mathcal{O}}$. Here, $\alpha_j^l$, $\alpha_j^u$, $\beta_j^u$, $q^l$, and $q^u$ are given constants.
\end{remark}

\begin{remark}
The choice of a rectangular domain with control applied on a single surface is primarily for computational simplicity (see, e.g., \cite{meurer2011flatness,meurer2009trajectory}). 
This framework can be extended to multiple control surfaces by introducing additional lifting transformations, as demonstrated for the 1D case with bilateral controls in \cite{katz2022network}. Theoretically, the results in this paper apply to any bounded, open, connected set $\mathcal{O} \subset \mathbb{R}^M$, as shown in related works \cite{barbu2013boundary,meng2022boundary,lhachemi2023boundary,pengfei2024CDC}.
However, for general domains, obtaining explicit eigenvalues and eigenfunctions for the operator \eqref{operatorA} below is challenging,  making it difficult to verify the resulting LMI conditions in real applications. 
\end{remark}
\begin{remark}
In this paper, we consider spatially uniform white noise to avoid technical difficulties. We propose a nonlinear noise perturbation function $g(x, z)$ to describe how the noise depend both on space and the state. We leave the more general case of spatially dependent white noise as a topic for future research. 
\end{remark}

Let
\begin{equation}\label{operatorA}
 \begin{array}{ll}
		\mathcal{A}\phi= -\Delta \phi, ~\mathcal{D}(\mathcal{A})= \{\phi| \phi\in H^{2}(\mathcal{O}) \cap H^{1}_{\partial}(\mathcal{O})\},\\
			H^{1}_{\partial}(\Omega)=\{\phi\in H^{1}(\mathcal{O})|\frac{\partial \phi(x)}{\partial \nu}=0 ~\mathrm{for}~ x\in \Gamma_1,\\
 ~~~~~~~~~~~~~~~~~~ \phi(x)=0 ~\mathrm{for}~ x\in \Gamma_2 \}.
	\end{array}
\end{equation}
Following \cite[Appendix II]{tucsnak2009observation},  we have the eigenvalues of $\mathcal{A}$:
\begin{equation}\label{eigenvalue1}
	\begin{array}{ll}
		\lambda_{\bf n}=\pi^2[ (\frac{{\bf n}_1}{a_1})^2 + \dots + (\frac{{\bf n}_{M-1}}{a_{M-1}})^2+ (\frac{{\bf n}_{M}-0.5}{a_{M}})^2  ] , \\
   {\bf n}=({\bf n}_1, \dots, {\bf n}_M ), ~~ {\bf n}_k\in \mathbb{N},
	\end{array}
\end{equation}
and the corresponding eigenfunctions: 
\begin{equation}\label{eigenfunction1}
	\begin{array}{ll}
		\phi_{\bf n}(x)= \sqrt{\frac{2}{a_M}}\cos\left (\frac{({\bf n}_M-\frac{1}{2})\pi x_M}{a_M} \right) \\
		 ~~~~~~~~~~~ \cdot \prod_{k=1}^{M-1}\left(\sqrt{\frac{2}{a_k}}\sin\left(\frac{{\bf n_k}\pi x_k}{a_k}\right)\right),\\
 x=(x_1,\dots,x_M)\in \mathcal{O}.   
	\end{array}
\end{equation}
We reorder the eigenvalues \eqref{eigenvalue1} to form a nondecreasing sequence $\{\lambda_{n}\}_{n=1}^{\infty}$ that satisfies
\begin{equation}\label{eigenvalue2}
	\lambda_{1} < \lambda_{2} \leq \dots \leq \lambda_{n} \leq \dots, ~~~
	\lim_{n\rightarrow\infty} \lambda_{n}=\infty,
\end{equation}
and denote the corresponding eigenfunctions \eqref{eigenfunction1} as $\{\phi_{n}\}_{n=1}^{\infty}$.
The eigenfunctions $\{\phi_{n}\}_{n=1}^{\infty}$ form a complete orthonormal system in $L^{2}(\mathcal{O})$.
For the eigenvalues \eqref{eigenvalue2}, it follows from \cite[Corollary 3.6.8]{tucsnak2009observation} that
\begin{equation}\label{george}
	\begin{array}{ll}
		\lambda_n=O(n^{\frac{2}{M}}), ~~n\to \infty.
	\end{array}
\end{equation}

Let $\delta>0$ be a desired decay rate. Since $\lim_{n\to \infty}\lambda_n=\infty$, there exists some $N_{0}\in \mathbb{N}$ such that
\begin{equation}\label{NNNed}
		-\lambda_{n}+q+\delta +\sigma_f  <0,~~n>N_{0},
\end{equation}
where $N_{0}$ denotes the number of unstable modes. This $N_0$ was chosen in \cite{wang2025sampled} for the 2D state-feedback control.
Let $N \in  \mathbb{N}$, $N \geq N_0$.
In this paper, the controller gain and observer dimension are determined by the first $N$ modes.

\begin{remark}
Differently from \cite{katz2020constructive,katz2022globalscl,wang2023constructive} for the 1D case where the controller gain was designed using the first $N_0$ modes,
 this paper designs the controller gain from the first $N$ (observer dimension) modes.
In the simulation from the conference version of this paper \cite{pengfei2024CDC}, we showed that the controller gain based on $N$ modes results in approximately 10 times larger noise intensity than the gain based on $N_0$ modes, as in \cite{katz2020constructive,katz2022globalscl,wang2023constructive}.
\end{remark}

Let $d$ be the maximum geometric multiplicity of $\{\lambda_{n}\}_{n=1}^{N_0}$. Clearly, $d\leq N_0$. 
We consider the boundary control input in the following form:  
\begin{equation}\label{controlshape}
	\begin{array}{ll}
u(x,t) = \sum_{i=1}^{d}b_i(x)u_i(t), ~x\in\Gamma_1,  
\end{array}
\end{equation}
where $b_i\in L^2(\tilde{\Gamma}_{i,1})$, $\tilde{\Gamma}_{i,1} \subset \Gamma_1$, $i=1,\dots,d$, and $b_i =0$ for $x \in \partial \tilde{\Gamma}_{i,1} $ and $ x \in \Gamma_1 \backslash \tilde{\Gamma}_{i,1}$.  
It means that each $b_i$ is supported over $\tilde{\Gamma}_{i,1} \subset \Gamma_1$ (see Fig. \ref{figplant} {\bf(b)} and {\bf(d)} for the 2D and 3D cases). 
Note that in \cite{meng2022boundary,meurer2011flatness,meurer2009trajectory,wang2025sampled,barbu2013boundary}, the shape functions are distributed over the entire $\Gamma_1$ (see Fig. \ref{figplant} {\bf(a)} and {\bf(c)}). This highlights the flexibility of our design.

  


\subsection{Lifting transformation and dynamic extension}\label{sec2.2}
In this section, we employ a lifting transformation to obtain an equivalent system with homogeneous boundary conditions. 
Given the shape functions $b_i$ ($i=1,\dots,d$),  
we let $ {\mathcal{O}}_i\subset  \mathcal{O}$ be such that $\tilde{\Gamma}_{i,1} \subset \partial  {\mathcal{O}}_i$. Let $\tilde{\Gamma}_{i,2}= \partial  {\mathcal{O}}_i \backslash \tilde{\Gamma}_{i,1} $. Let $\mu_i\neq \lambda_n$, $i=1,\dots,d$ be given.
We define a sequence of functions $\psi_{i}\in L^{2}(\mathcal{O}_i)$ satisfying the following elliptic equation (see \cite{feng2022boundary,meng2022boundary}):
\begin{equation}\label{eq16neumann22ard}
	\begin{split}
	&\Delta \psi_{i}(x)   = -\mu_{i} \psi_{i}(x), ~x\in  {\mathcal{O}}_i, \\
	&\frac{\partial\psi_{i}(x)}{\partial \nu}=  b_{i}(x), ~x\in \tilde{\Gamma}_{i,1},~~\psi_{i}(x)=0, ~x\in \tilde{\Gamma}_{i,2}. 
	\end{split}
\end{equation}
We extend the domain of $\psi_i$ to $\mathcal{O}$ by letting $\psi_i(x)=0$ for $x\in \bar{\mathcal{O}}\backslash  {\mathcal{O}}_i$. 

\begin{remark}
In \cite{barbu2013boundary,lhachemi2023boundary,meng2022boundary,munteanu2017stabilisation,wang2025sampled}, $\psi_i$ are constructed in the entire domain $\mathcal{O}$ (see Fig. \ref{figplant}{\bf(a)} for the 2D case and Fig. \ref{figplant}{\bf(c)} for the 3D case). In this paper, we suggest a larger class of $\psi_i$ that can be distributed over subdomains of $\mathcal{O}$ or on the entire $\mathcal{O}$ (see Fig. \ref{figplant}), leading to essentially larger Lipschitz constants (see Tables \ref{tab1} and \ref{table2} in Section \ref{example}).
\end{remark}

\begin{figure*}[hbt!]
\centerline{\includegraphics[width=12.5cm]{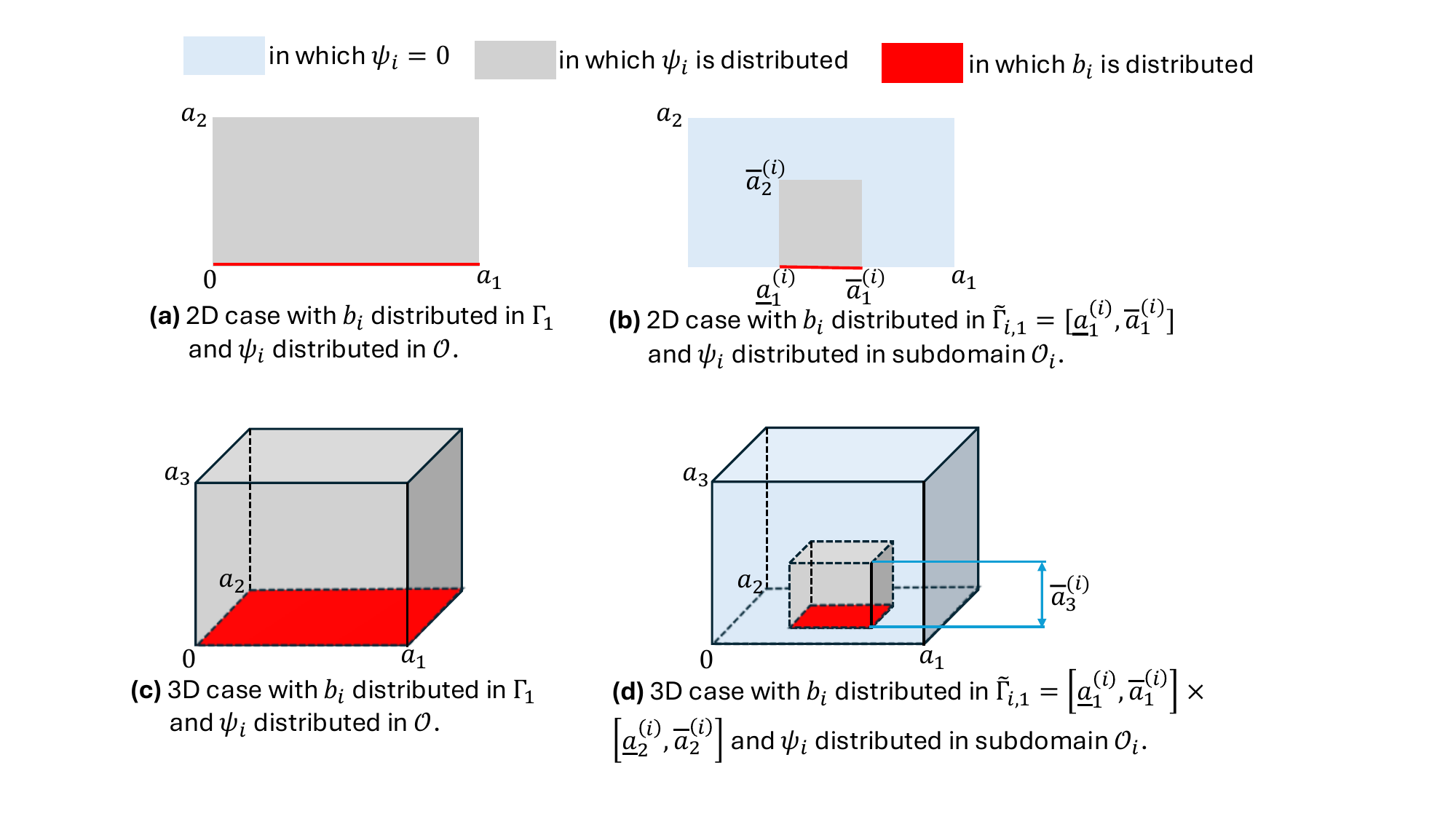}} 
\caption{ {\bf (a)} and {\bf (c)} show the distribution of $b_i$ over $\Gamma_1$ and $\psi_i$ over $\mathcal{O}$; {\bf (b)} and {\bf (d)} show the distribution of $b_i$ over $\tilde{\Gamma}_{i,1}\subseteq \Gamma_1$ and $\psi_i$ over $\mathcal{O}_i\subseteq \mathcal{O}$.}
\label{figplant}
\end{figure*}

Consider the following lifting transformation:  
\begin{equation}\label{eq17neumannrdaa}
	\begin{split}
&		w(\cdot,t)=z(\cdot,t)- \psi^{\mathrm{T}}(\cdot){\bf u}(t), ~\\
 &\psi=\mathrm{col}\{\psi_{i}\}_{i=1}^{d},~~{\bf u}(t)= \mathrm{col}\{u_{m}(t)\}_{m=1}^{d}.
	\end{split}
\end{equation}
Substituting \eqref{eq17neumannrdaa} into \eqref{eq14neumannrd} and using operator \eqref{operatorA}, we obtain  
\begin{equation}\label{dyeq18neumannrd}
\setlength{\arraycolsep}{0.01pt}		\begin{array}{ll}
		
		\mathrm{d}w(t)=[-\mathcal{A}w(t)+ qw(t)] \mathrm{d}t\\
	~~~~~~~~~ - \psi^{\mathrm{T}}(\cdot)[\mathrm{d} {\bf u}(t)- \Xi_{0}{\bf u}(t)\mathrm{d}t  ] \\
	~~~~~~~~~ + f (\cdot, w(t) + \psi^{\mathrm{T}}(\cdot)  {\bf u}(t))\mathrm{d}t \\
    ~~~~~~~ ~~ +g (\cdot, w(t) + \psi^{\mathrm{T}}(\cdot)  {\bf u}(t)) \mathrm{d}\mathcal{W}(t), \\
	\end{array}
\end{equation}
where $w(t)=w(\cdot,t)$, $\Xi_{0}=\mathrm{diag}\{-\mu_{1}+q,\dots,-\mu_{d}+q\}$.
We treat ${\bf u}(t)\in\mathbb{R}^{d}$ as an additional state variable subject to the dynamics:
\begin{equation}\label{eq19neumannrd}
	\mathrm{d} {\bf u}(t)=[\Xi_{0}{\bf u}(t)+{\bf v}(t)]\mathrm{d}t,~~~{\bf u}(0)=0,
\end{equation}
whereas ${\bf v}(t)\in\mathbb{R}^{d}$ is the new control input.
From \eqref{dyeq18neumannrd} and \eqref{eq19neumannrd}, we have the following equivalent system:
\begin{equation}\label{eq20neumannrd}
 \begin{array}{ll}
	\mathrm{d}w(t)=[-\mathcal{A}w(t)+ qw(t) - \psi^{\mathrm{T}}(\cdot){\bf v}(t)] \mathrm{d}t\\
	~~~~~~~~~ + f (\cdot, w(t) + \psi^{\mathrm{T}}(\cdot)  {\bf u}(t))\mathrm{d}t \\
    ~~~~~~~ ~~ +g (\cdot, w(t) + \psi^{\mathrm{T}}(\cdot)  {\bf u}(t)) \mathrm{d}\mathcal{W}(t),\\
    w(0)=z(\cdot,0). 
\end{array}
\end{equation}
 The boundary measurement becomes
\begin{equation}\label{measurementnewrd}
	\begin{array}{ll}
y(t)= \mathrm{col}\{\langle c_m, w(\cdot, t)  + \psi ^{\mathrm{T}}(\cdot){\bf u}(t)\rangle_{\Gamma_1}\}_{m=1}^{d}.
\end{array}
\end{equation}

\subsection{Finite-dimensional observer and controller}\label{sec2.3}
Denote by $\mathbb{P}_{N}$ the projector on the finite dimensional subspace $\mathcal{X}_{N}=\text{span}\{\phi_n\}_{n=1}^{N}$. We have $\mathcal{X}_{N}=\mathbb{P}_{N}L^2(\mathcal{O})$, and for any $h \in L^2(\mathcal{O})$, 
\begin{equation*}
	\begin{array}{ll}	\mathbb{P}_{N}h(\cdot)=\sum_{n=1}^{N} \langle h,\phi_{n} \rangle_{\mathcal{O}}\phi_n,\\
(\text{Id}-\mathbb{P}_{N})h=\sum_{n=N+1}^{\infty} \langle h,\phi_{n} \rangle_{\mathcal{O}}\phi_n.
	\end{array}
\end{equation*}

Let \(\mathcal{A}_N\) be the restriction of \(\mathcal{A}\) to the finite-dimensional subspace \(\mathcal{X}_N\), i.e., \(\mathcal{A}_N : \mathcal{X}_N \to \mathcal{X}_N\) is defined by \(\mathcal{A}_N h = \mathbb{P}_N(\mathcal{A}h)\) for \(h \in \mathcal{X}_N\). Similarly, let \(\mathcal{A}_{\text{tail}}\) be the restriction of \(\mathcal{A}\) to \(\mathcal{X}_{\text{tail}}\), defined by \(\mathcal{A}_{\text{tail}} h = (\text{Id} - \mathbb{P}_N)(\mathcal{A}h)\) for \(h \in \mathcal{X}_{\text{tail}} \cap \mathcal{D}(\mathcal{A})\). 
The operators leave their respective subspaces invariant: \(\mathcal{A}_N(\mathcal{X}_N) \subset \mathcal{X}_N\) and \(\mathcal{A}_{\text{tail}}(\mathcal{X}_{\text{tail}}) \subset \mathcal{X}_{\text{tail}}\). 
The spectra of these restricted operators are given by \(\sigma(\mathcal{A}_N) = \{\lambda_n\}_{n=1}^{N}\) and \(\sigma(\mathcal{A}_{\text{tail}}) = \{\lambda_n\}_{n=N+1}^{\infty}\) (see \cite{barbu2011internal,kato2013perturbation}). 
We shall write \eqref{eq20neumannrd} as
\begin{equation}\label{eq20neumannrdX0}
 \begin{array}{ll}
	\mathrm{d}X_{N}(\cdot,t)=[-\mathcal{A}_{N}X_{N}(\cdot,t)+ qX_{N}(\cdot,t) \\
	~~~~~ -\mathbb{P}_{N} \psi^{\mathrm{T}}(\cdot){\bf v}(t)] \mathrm{d}t\\
	~~~~~ + \mathbb{P}_{N}f (\cdot, w(t) + \psi^{\mathrm{T}}(\cdot)  {\bf u}(t))\mathrm{d}t \\
    ~~~~~ +\mathbb{P}_{N}g (\cdot, w(t) + \psi^{\mathrm{T}}(\cdot)  {\bf u}(t)) \mathrm{d}\mathcal{W}(t),\\
     X_{N}(\cdot,0)=\mathbb{P}_{N}w(0),
\end{array}
\end{equation}
and
\begin{equation}\label{eq20neumannrdtail}
 \begin{array}{ll}
	\mathrm{d}X_{\mathrm{tail}}(\cdot,t)=[-\mathcal{A}_{\mathrm{tail}}X_{\mathrm{tail}}(\cdot,t)+ qX_{\mathrm{tail}}(\cdot,t) \\
	~~~~~ -(\text{Id}-\mathbb{P}_{N}) \psi^{\mathrm{T}}(\cdot){\bf v}(t)] \mathrm{d}t\\
	~~~~~ + (\text{Id}-\mathbb{P}_{N})f (\cdot, w(t) + \psi^{\mathrm{T}}(\cdot)  {\bf u}(t))\mathrm{d}t \\
    ~~~~~ +(\text{Id}-\mathbb{P}_{N})g (\cdot, w(t) + \psi^{\mathrm{T}}(\cdot)  {\bf u}(t)) \mathrm{d}\mathcal{W}(t),\\
     X_{\mathrm{tail}}(\cdot,0)= (\text{Id}-\mathbb{P}_{N})w(0). 
\end{array}
\end{equation}
Recall that $X_N(\cdot,t)=\mathbb{P}_{N}w(\cdot,t)=\sum_{n=1}^{N} w_{n}(t)\phi_n$, where $w_{n}(t)=\langle w(\cdot,t),\phi_{n} \rangle_{\mathcal{O}}$.  We rewrite \eqref{eq20neumannrdX0} as
\begin{equation}\label{wnwnrd}
	\begin{array}{ll}
		\mathrm{d} w_{n}(t)= [(-\lambda_{n}+q)w_{n}(t) + f_n(t)  + {\bf b}_{n}{\bf v}(t)  ]\mathrm{d}t\\
		~~~~~~~~~~~ +g_n(t) \mathrm{d}\mathcal{W}(t),~  ~t\geq 0, \\
		w_{n}(0)=\langle w(\cdot,0), \phi_{n} \rangle_{\mathcal{O}}, ~~n=1,\dots,N,
	\end{array}
\end{equation}
where
\begin{equation*}
	\begin{array}{ll}
		{\bf b}_{n}= [\langle -\psi_{1}, \phi_{n}\rangle_{\mathcal{O}}, \cdots, \langle -\psi_{d}, \phi_{n}\rangle_{\mathcal{O}}   ], \\
f_n(t)=\langle f(\cdot,w(\cdot,t) + \psi^{\mathrm{T}}{\bf u}(t) ), \phi_n \rangle_{\mathcal{O}}, \\
g_n(t)=\langle g(\cdot,w(\cdot,t) + \psi^{\mathrm{T}}{\bf u}(t) ), \phi_n \rangle_{\mathcal{O}}.
	\end{array}
\end{equation*}
Following \cite{katz2020constructive},  we construct a $N$-dimensional observer of the form
\begin{equation}\label{observer22300rd}
  \begin{array}{ll}
 \hat{w}(x,t)=\sum_{n=1}^{N}\hat{w}_{n}(t)\phi_{n}(x), \\
 \mathrm{d} {\hat{w}}_{n}(t)=[(-\lambda_{n}+q)\hat{w}_{n}(t) + \hat{f}_n(t) + {\bf b}_{n}{\bf v}(t)]\mathrm{d}t \\
 ~~~~~~~~~~~ -l_{n} [\hat{y}(t) - y(t) ]\mathrm{d}t, ~~t>0, ~1\leq n\leq N, \\
  \hat{w}_{n}(0)=0,
  \end{array}
\end{equation}
where $y(t)$ is defined in \eqref{measurementnewrd}, $\{l_{n}\in \mathbb{R}^{1 \times d}\}_{n=1}^{N}$ are observer gains,
and  
\begin{equation*}
\begin{array}{ll}
 \hat{f}_n(t)=\langle f(\cdot,\hat{w}(\cdot,t) + \psi^{\mathrm{T}}{\bf u}(t) ), \phi_n \rangle_{\mathcal{O}},\\
 \hat{y}(t)= \mathrm{col}\{\langle c_m, \hat{w}(\cdot, t)  + \psi ^{\mathrm{T}}(\cdot){\bf u}(t)\rangle_{\Gamma_1}\}_{m=1}^{d}. 
\end{array}
\end{equation*}


Introduce the notations:  
\begin{equation}\label{notationsaa}
\setlength{\arraycolsep}{0.1pt}		\begin{array}{ll}
A_0=\mathrm{diag}\{-\lambda_{n}+q\}_{n=1}^{N_0}, ~ {\bf B}_0=[{\bf b}^{\mathrm{T}}_{1},\dots, {\bf b}^{\mathrm{T}}_{N_0}]^{\mathrm{T}},\\
	{\bf c}_n = [ {\bf c}_{1,n}, \dots, {\bf c}_{d,n}]^{\mathrm{T}},
 {\bf c}_{m,n}= \langle {c}_{m}, \phi_{n} \rangle_{\Gamma_1}, \\
 {\bf C}_{0}  = [{\bf c}_{1},\dots, {\bf c}_{N_0}].   
	\end{array}
\end{equation}
We rewrite $A_{0}, {\bf B}_{0}, {\bf C}_{0}$ as:  
\begin{equation}\label{A0A0A0dy}
\begin{array}{ll}
	A_{0}= \mathrm{diag}\{\tilde{A}_{1}, \dots, \tilde{A}_{p}\}, \\
	\tilde{A}_{j}=\mathrm{diag}\{-\lambda_{j}+q, \dots, -\lambda_{j}+q\}\in\mathbb{R}^{n_{j}\times n_{j}},\\
	\lambda_{k} \neq \lambda_{j} ~~\mathrm{iff}~~k\neq j, ~~ k,j=1,\dots,p, \\
{\bf B}_{0}=[B^{\mathrm{T}}_{1}, \dots, B^{\mathrm{T}}_{p}]^{\mathrm{T}},~~ B_{j}\in\mathbb{R}^{n_{j}\times d},\\
{\bf C}_{0}=[C_{1}, \dots, C_{p}],~~C_{j}\in\mathbb{R}^{d\times n_{j}},  
\end{array}	
\end{equation}
 where $n_{1},\dots, n_{p}$ are positive integers such that $n_{1}+\dots+n_{p}=N_{0}$. Clearly, $n_{j}\leq d$, $j=1,\dots,p$.
Assume  
\begin{equation}\label{eq20dy}
	\mathrm{rank}(C_{j})= n_{j},\mathrm{rank}(B_{j})= n_{j},~j=1,\dots,p.
\end{equation}
By Lemma 2.1 in \cite{wang2024auto},  under assumption \eqref{eq20dy},
the pair $(A_{0}, {\bf C}_{0})$ is observable and $(A_0,{\bf B}_0)$ is controllable. Hence, we can choose $l_{1},\dots, l_{N_0}$ such that
$L_0=[l_{1}^{\mathrm{T}},\dots, l_{N_0}^{\mathrm{T}}]^{\mathrm{T}}\in\mathbb{R}^{N_0\times d}$ satisfies the following Lyapunov inequality
\begin{equation}\label{observergainfulldy}
		P_{o}(A_0-L_0{\bf C}_0)+ (A_0-L_0{\bf C}_0)^{\mathrm{T}}P_{o}<-2\delta P_{o},
\end{equation}
where $0<P_{o}\in\mathbb{R}^{N_0 \times N_0}$.
Furthermore, as in \cite{katz2020constructive,sakawa1983feedback}, we choose $l_n=0_{1\times d}$ for $n = N_0 +1, \dots ,N$, to guarantee the LMI feasibility.

Let
$A=\mathrm{diag}\{-\lambda_{n}+q\}_{n=1}^{N}$, ${\bf B}=[{\bf b}^{\mathrm{T}}_{1},\dots, {\bf b}^{\mathrm{T}}_{N}]^{\mathrm{T}}$,  $\tilde{A}= \mathrm{diag}\{\Xi_{0}, A \}$, $\tilde{{\bf B}}=[I_{d}, {\bf B}^{\mathrm{T}} ]^{\mathrm{T}}$. Since $(A_0,{\bf B}_0)$ is controllable and $A_1$ is stable (see \eqref{NNNed}), 
the pair $(A, {\bf B})$ is stabilizable, which implies that $(\tilde{A},\tilde{{\bf B}})$ is stabilizable.
Let $K\in \mathbb{R}^{d\times (d+N)}$ be the controller gain (it will be found from the resulting LMIs).
We propose a controller of the form  
\begin{equation}\label{controllawrd}
	\begin{array}{ll}
		{\bf v}(t)=-K\hat{w}^{N}(t), \\
		\hat{w}^{N}=\mathrm{col}\{{\bf u},\hat{w}_{1},\dots,\hat{w}_{N}\}.
	\end{array}
\end{equation}

Well-posedness of \eqref{eq19neumannrd}, \eqref{eq20neumannrd}, \eqref{observer22300rd} subject to controller \eqref{controllawrd} can be obtained by arguments similar to \cite[Sec. 2.1.2]{wang2023constructive}: For any initial value $z_0\in  \mathcal{D}(\mathcal{A})$, system \eqref{eq19neumannrd}, \eqref{eq20neumannrd} with control input \eqref{observer22300rd},  \eqref{controllawrd} admits a unique strong solution
{$w \in L^{2}(\Omega; C([0,T]; L^{2}(\mathcal{O}) ) ) \cap L^2([0,T] \times \Omega;H^{1}(\mathcal{O}))$} for any {$T>0$}, such that $w(\cdot,t)\in \mathcal{D}(\mathcal{A})$, $t\geq 0$, almost surely.

\subsection{Mean-square $L^2$ exponential stability}\label{sec2.4}
Let $e_{n}(t)=w_{n}(t)-\hat{w}_{n}(t)$, $1\leq n \leq N $
be the estimation error. Denote by $e^{N}(t) = \mathrm{col}\{e_{n}(t)\}_{n=1}^{N}$.
The last term on the right-hand side of \eqref{observer22300rd} can be presented as
\begin{equation}\label{tail100rd}
	\begin{array}{ll}
		\hat{y}(t) - y(t) = -{\bf C}e^{N}(t) -\zeta(t) , ~{\bf C}  = [{\bf c}_{1},\dots, {\bf c}_{N}], \\
		\zeta(t)=\sum_{n=N+1}^{\infty}{\bf c}_{n}w_{n}(t),  
	\end{array}
\end{equation}
where ${\bf c}_n$ are defined in \eqref{notationsaa}. Note that the output residue $\zeta(t)$ appears in the finite-dimensional part of the closed-loop system (see \eqref{eqXX0000aa} and \eqref{eqXX0011aa} below). We use the Cauchy-Schwarz inequality for the estimate of $\zeta^2(t)$:
\begin{equation}\label{eq3333rd}
\begin{array}{ll}	
|\zeta(t)|^{2}
	\leq \kappa_{N}\sum_{n=N+1}^{\infty}\lambda_{n} w_{n}^{2}(t),
\end{array}
\end{equation}
where 
\begin{equation*}
	\begin{array}{ll}
		\kappa_N=\sum_{n=N+1}\frac{|{\bf c}_n|^2}{\lambda_n}  =\sum_{m=1}^{d}\sum_{n=N+1 }^{\infty} \frac{{\bf c}_{m,n}^{2}}{\lambda_{n}}. 
	\end{array}
\end{equation*}
We give the following estimate for $\kappa_N$, which is crucial for the feasibility analysis of the resulting LMIs.
\begin{proposition}\label{propo1}
$\kappa_{N}=O(N^{-\frac{1}{M}})$, $N\to \infty$.
\end{proposition}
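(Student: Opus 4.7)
The plan is to exploit the very explicit form of the eigenfunctions \eqref{eigenfunction1}. Since $\Gamma_1=\{x_M=0\}$ and $\cos(({\bf n}_M-1/2)\pi\cdot 0/a_M)=1$, the trace reduces to
$$
\phi_{\bf n}|_{\Gamma_1}(x_1,\dots,x_{M-1})=\sqrt{2/a_M}\,\tilde\phi_{{\bf n}'},\quad \tilde\phi_{{\bf n}'}=\prod_{k=1}^{M-1}\sqrt{2/a_k}\sin({\bf n}_k\pi x_k/a_k),
$$
with ${\bf n}'=({\bf n}_1,\dots,{\bf n}_{M-1})$; crucially, this trace is independent of ${\bf n}_M$, and $\{\tilde\phi_{{\bf n}'}\}$ is an orthonormal basis of $L^2(\Gamma_1)$. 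Hence ${\bf c}_{m,{\bf n}}=\sqrt{2/a_M}\,\tilde c_{m,{\bf n}'}$ with $\tilde c_{m,{\bf n}'}=\langle c_m,\tilde\phi_{{\bf n}'}\rangle_{\Gamma_1}$, and Parseval gives $\sum_{{\bf n}'}\tilde c_{m,{\bf n}'}^2=\|c_m\|_{L^2(\Gamma_1)}^2<\infty$.

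Next, I would reorganize the tail by grouping multi-indices that share the same ${\bf n}'$:
$$
\kappa_N=\frac{2}{a_M}\sum_{m=1}^{d}\sum_{{\bf n}'}\tilde c_{m,{\bf n}'}^2\,S_{{\bf n}'}(N),\quad
S_{{\bf n}'}(N)=\!\!\sum_{{\bf n}_M\ge 1,\;\lambda_{\bf n}>\lambda_N}\!\!\frac{1}{\lambda_{\bf n}}.
$$
The heart of the argument is to establish a uniform bound $S_{{\bf n}'}(N)\le C/\sqrt{\lambda_N}$ in ${\bf n}'$.

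Writing $\lambda_{\bf n}=\alpha_{{\bf n}'}+\beta({\bf n}_M-1/2)^2$ with $\beta=\pi^2/a_M^2$ and $\alpha_{{\bf n}'}=\pi^2\sum_{k=1}^{M-1}({\bf n}_k/a_k)^2$, the constraint $\lambda_{\bf n}>\lambda_N$ gives the decoupling inequality
$$
\lambda_{\bf n}\ge \tfrac12\bigl(\lambda_N+\alpha_{{\bf n}'}+\beta({\bf n}_M-1/2)^2\bigr)\ge \tfrac12\bigl(\lambda_N+\beta({\bf n}_M-1/2)^2\bigr),
$$
so that
$$
S_{{\bf n}'}(N)\le 2\sum_{{\bf n}_M\ge 1}\frac{1}{\lambda_N+\beta({\bf n}_M-1/2)^2}\le \frac{2}{\lambda_N+\beta/4}+\int_{0}^{\infty}\frac{2\,dt}{\lambda_N+\beta t^2}=\frac{C}{\sqrt{\lambda_N}}
$$
by elementary integral comparison, where $C$ depends only on $\beta$.

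Combining the previous two displays yields $\kappa_N\le \frac{C'}{\sqrt{\lambda_N}}\sum_{m=1}^{d}\|c_m\|_{L^2(\Gamma_1)}^2$. Applying Weyl's asymptotics for the mixed Dirichlet--Neumann Laplacian on the rectangular box, which is sharp in both directions and underlies \eqref{george} (cf.\ \cite[Cor.~3.6.8]{tucsnak2009observation}), gives $\lambda_N\asymp N^{2/M}$, hence $\kappa_N=O(N^{-1/M})$. The single nontrivial step is the uniform bound on $S_{{\bf n}'}(N)$; the clean way to get it is the decoupling inequality above, which disposes of the dependence on ${\bf n}'$ and reduces everything to a one-dimensional tail integral.
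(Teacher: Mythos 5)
Your proof is correct and follows the same skeleton as the paper's: restrict the eigenfunctions to $\Gamma_1$ (where the cosine factor equals $1$, so the trace is independent of ${\bf n}_M$), invoke Parseval for the orthonormal system $\{\tilde\phi_{{\bf n}'}\}$ on $\Gamma_1$, reduce everything to a one-dimensional sum over the ${\bf n}_M$-fiber, and finish with the two-sided Weyl asymptotics $\lambda_N\asymp N^{2/M}$. The only real difference is in how that fiber sum is estimated: the paper splits the tail at a threshold $n_M^*\sim a_M\lambda_{N+1}^{1/2}/\pi$ and bounds the two pieces separately (the near part by $n_M^*/\lambda_{N+1}$ times the Parseval mass, the far part by an integral comparison in $n_M$), whereas you use the single decoupling inequality $\lambda_{\bf n}\ge\tfrac12\bigl(\lambda_N+\beta({\bf n}_M-\tfrac12)^2\bigr)$ to get a bound uniform in ${\bf n}'$ in one stroke; your version is a bit cleaner and yields an explicit constant, while both ultimately rest on the same $O(\lambda_N^{-1/2})$ fiber estimate. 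One small indexing point: the tail in $\kappa_N$ runs over the reordered indices $n\ge N+1$, and when the eigenvalue at the cutoff has multiplicity this set is not contained in $\{{\bf n}:\lambda_{\bf n}>\lambda_N\}$, so your identity for $\kappa_N$ as written can drop (or fail to dominate) finitely many terms; replacing the condition by $\lambda_{\bf n}\ge\lambda_{N+1}$ (which does contain the tail) repairs this, and the decoupling inequality goes through verbatim with $\lambda_{N+1}$ in place of $\lambda_N$.
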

\begin{proof}
According to the correspondence between eigenvalues and indices given in \eqref{eigenvalue1} and \eqref{eigenvalue2}, we associate each $n \in \mathbb{N}$ with a pair $(\bar{n}_M, n_M)$, where $\bar{n}M = (n_1, \dots, n{M-1})$. Then we have
\begin{equation*}
\begin{array}{ll}
\phi_n(x_1,\dots,x_{M-1},0) = \Phi_{\bar{n}_M}(x_1,\dots,x_{M-1}) \cdot \sqrt{\frac{2}{a_M}}, \\
\Phi_{\bar{n}_M}(x_1,\dots,x_{M-1}) = \prod_{k=1}^{M-1} \sqrt{\frac{2}{a_k}} \sin\left(\frac{n_k\pi x_k}{a_k}\right).
\end{array}
\end{equation*}
The system $\{\Phi_{\bar{n}_M}\}_{\bar{n}_M \in \mathbb{N}^{M-1}}$ forms a complete orthonormal system in $L^2(\Gamma_1)$. 
For $c_m \in L^2(\Gamma_1)$, define $\tilde{c}_{m,\bar{n}_M} = \langle c_m, \Phi_{\bar{n}_M} \rangle_{\Gamma_1}$. Then ${\bf c}_{m,n} = \langle c_m, \phi_n \rangle_{\Gamma_1} = \sqrt{\frac{2}{a_M}} \tilde{c}_{m,\bar{n}_M}$, 
and thus ${\bf c}_{m,n}^2 = \frac{2}{a_M} \tilde{c}_{m,\bar{n}_M}^2$.
We write $\kappa_N$ as 
\begin{equation}\label{kappaNN}
\begin{array}{ll}
\kappa_N = \frac{2}{a_M} \sum_{m=1}^{d} \sum_{n=N+1}^{\infty} \frac{\tilde{c}_{m,\bar{n}_M}^2}{\lambda_n}.
\end{array}
\end{equation}
For $\lambda_{N+1}$, define the index $n^* = (n_1^*, \dots, n_M^*)$ where:
\begin{equation*}
\begin{array}{ll}
     n_j^* = \left\lfloor \frac{a_j}{\pi} \lambda_{N+1}^{1/2} \right\rfloor \quad \text{for } j=1,\dots,M-1,\\
n_M^* = \left\lfloor \frac{a_M}{\pi} \lambda_{N+1}^{1/2} + \frac12 \right\rfloor,
\end{array}
\end{equation*}
where $\lfloor a \rfloor$ denotes the integer part of $0<a\in\mathbb{R}$. 
By the eigenvalue asymptotics \eqref{george}, we have $n_j^* = O(N^{1/M})$, $N\to\infty$, for all $j=1,\dots,M$. 
We split the sum  $\sum_{n=N+1}^{\infty} \frac{\tilde{c}_{m,\bar{n}_M}^2}{\lambda_n}$ in  \eqref{kappaNN} into two parts:
\begin{equation}\label{sumsum}
\begin{array}{ll}
\sum_{n=N+1}^{\infty} \frac{\tilde{c}_{m,\bar{n}_M}^2}{\lambda_n} 
= S_1+S_2,\\
S_1= \sum_{n_M=1}^{n_M^*} \sum_{\substack{\bar{n}_M \in \mathbb{N}^{M-1} \\ |\bar{n}_M| \geq |\bar{n}_M^*|}} \frac{\tilde{c}_{m,\bar{n}_M}^2}{\lambda_n},\\
S_2= \sum_{\bar{n}_M \in \mathbb{N}^{M-1}} \sum_{n_M=n_M^*+1}^{\infty} \frac{\tilde{c}_{m,\bar{n}_M}^2}{\lambda_n},
\end{array}
\end{equation}
where $|\bar{n}_M^*|^2 = \sum_{j=1}^{M-1} (n_j^*)^2$.

For $S_1$, noting that $\sum_{\bar{n}_M} \tilde{c}_{m,\bar{n}_M}^2 \leq \|c_m\|_{L^2(\Gamma_1)}^2$ and $\lambda_n\geq \lambda_{N+1}$ for $n>N$, we have 
\begin{equation}\label{S1S1}
\begin{array}{ll}
    S_1 \leq \frac{1}{\lambda_{N+1}} \sum_{n_M=1}^{n_M^*} \sum_{\substack{\bar{n}_M \in \mathbb{N}^{M-1} \\ |\bar{n}_M| \geq |\bar{n}_M^*|}} \tilde{c}_{m,\bar{n}_M}^2 \\
\leq \frac{n_M^*}{\lambda_{N+1}} \|c_m\|_{L^2(\Gamma_1)}^2.
\end{array}
\end{equation}
For $S_2$, when $n_M \geq n_M^* + 1$, we use the bound $\lambda_n = \pi^2[ \sum_{j=1}^{M-1} ( {n_j}/{a_j} )^2 + ( \frac{n_M - \frac12}/{a_M} )^2 ] 
\geq {\pi^2}/{a_M^2} ( n_M - \frac12 )^2$. 
Therefore, 
\begin{equation}\label{S2S2}
\begin{array}{ll}
     S_2 \leq \sum_{\bar{n}_M \in \mathbb{N}^{M-1}} \sum_{n_M=n_M^*+1}^{\infty} \frac{\tilde{c}_{m,\bar{n}_M}^2}{\frac{\pi^2}{a_M^2} (n_M - \frac12)^2} \\
= \frac{a_M^2}{\pi^2} \|c_m\|_{L^2(\Gamma_1)}^2 \sum_{n_M=n_M^*+1}^{\infty} \frac{1}{(n_M - \frac12)^2} \\
\leq  \frac{a_M^2}{\pi^2(n_M^* - \frac12)}\|c_m\|_{L^2(\Gamma_1)}^2, 
\end{array}
\end{equation}
where the last inequality follows from  $\sum_{n_M=n_M^*+1}^{\infty} \frac{1}{(n_M - \frac12)^2} 
\leq \int_{n_M^*}^{\infty} \frac{dx}{(x - \frac12)^2} 
= \frac{1}{n_M^* - \frac12}$. 
Since $n_{M}^{*}=O(N^{1/M})$ and $\lambda_{N+1}=O(N^{2/M})$, $N\to\infty$, we get $S_1 = O(N^{-1/M})$ and $S_2  = O(N^{-1/M})$,  $N\to\infty$. 
Substituting \eqref{sumsum}, \eqref{S1S1} and \eqref{S2S2} into  \eqref{kappaNN}, we get the desired conclusion.
\end{proof}

From \eqref{wnwnrd}, \eqref{observer22300rd}, and \eqref{tail100rd}, we have  
\begin{equation}\label{error00rd}
  \begin{array}{ll}
 \mathrm{d} e_{n}(t)=[(-\lambda_{n}+q)e_{n}(t)  -l_{n} ({\bf C}e^{N}(t) + \zeta(t)) \\
 + \bar{f}_n(t)] \mathrm{d}t+ g_n(t)\mathrm{d}\mathcal{W}(t), ~1\leq n \leq N,  
  \end{array}
\end{equation}
where $\bar{f}_n= f_n-\hat{f}_n $.
Let $L=[L_0^{\mathrm{T}}, 0_{d\times (N-N_0)}]^{\mathrm{T}}$, ${\bf I}_{1}=[0_{N \times d}, I_N]^{\mathrm{T}}$, $\hat{f}^{N}=\mathrm{col}\{\hat{f}_n\}_{n=1}^{N}$, $\bar{f}^{N}=\mathrm{col}\{\bar{f}_n\}_{n=1}^{N}$, $ g^{N}=[g_1^{\mathrm{T}}, \dots g_N^{\mathrm{T}} ]^{\mathrm{T}} \in \mathbb{R}^{N\times r}$. 
By \eqref{eq19neumannrd}, \eqref{eq20neumannrdtail}, \eqref{observer22300rd}, \eqref{controllawrd}, and \eqref{error00rd}, we obtain 
 the closed-loop system:  
\begin{subequations}\label{eqXX00aa}
\begin{align}
&\mathrm{d}\hat{w}^{N}(t)=[(\tilde{A}- \tilde{\bf B}K)\hat{w}^{N}(t) + {\bf I}_{1}\hat{f}^{N}(t) ]\mathrm{d}t \nonumber \\
&~~~~~~+  {\bf I}_{1} L[{\bf C}e^{N}(t)+\zeta(t) ]\mathrm{d}t,  \label{eqXX0000aa}\\
&\mathrm{d}e^{N}(t)= [(A - L{\bf C})e^{N}(t) +\bar{f}^{N}(t) - L \zeta(t)]\mathrm{d}t  \nonumber \\
&~~~~~~+ g^{N}(t)\mathrm{d}\mathcal{W}(t),   \label{eqXX0011aa}\\
&\mathrm{d}X_{\mathrm{tail}}(\cdot,t)=[-\mathcal{A}_{\mathrm{tail}}X_{\mathrm{tail}}(\cdot,t)+ qX_{\mathrm{tail}}(\cdot,t)  \nonumber\\
&	~~~~~ +(\text{Id}-\mathbb{P}_{N}) \psi^{\mathrm{T}}(\cdot)K\hat{w}^{N}(t)] \mathrm{d}t \nonumber \\
&	~~~~~ + (\text{Id}-\mathbb{P}_{N})f (\cdot, w(t) + \psi^{\mathrm{T}}(\cdot)  {\bf u}(t))\mathrm{d}t  \label{eqXX0022aa}\\
&   ~~~~~ +(\text{Id}-\mathbb{P}_{N})g (\cdot, w(t) + \psi^{\mathrm{T}}(\cdot)  {\bf u}(t)) \mathrm{d}\mathcal{W}(t).  \nonumber
\end{align}
\end{subequations}
For the stability of  \eqref{eqXX00aa}, we consider the Lyapunov function \vspace{-0.25cm}
\begin{equation}\label{FSVdy}
\begin{array}{ll}
	V(t)=V_{P_1}(t)+ V_{P_{2}}(t) +V_{\mathrm{tail}}(t),\\
	V_{P_1}(t)=|\hat{w}^{N}(t)|^{2}_{P_1}, ~V_{P_{2}}(t)=|e^{N}(t)|^{2}_{P_{2}}, \\
	V_{\mathrm{tail}}(t)= \|X_{\mathrm{tail}}(\cdot,t)\|^2_{L^2},  
\end{array}
\end{equation}
where $0<P_1\in \mathbb{R}^{(N+d)\times (N+d)}$, $0<P_2\in \mathbb{R}^{N\times N}$. 
By Parseval's identity, we have \vspace{-0.2cm}
\begin{equation}\label{triangleineq}
\begin{array}{ll}
 \alpha_{\min} [\|w(\cdot,t)\|^{2}_{L^2}+|{\bf u}(t)|^2 + |e^N(t)|^2 ] \\
 \leq  V(t)
  \leq \alpha_{\max} [\|w(\cdot,t)\|^{2}_{L^2}+|{\bf u}(t)|^2+ |e^N(t)|^2], \vspace{-0.2cm}
\end{array}
\end{equation}
for some $0<\alpha_{\min}\leq \alpha_{\max}$.

For {$V_{\mathrm{tail}}$}, calculating the generator {$\mathcal{L}$} along \eqref{eqXX0022aa} (see \cite[P. 228]{chow2007stochastic}) we obtain
\begin{equation}\label{directeq6aabb}
 \begin{array}{ll}
		\mathcal{L}V_{\mathrm{tail}}(t) 
= 2  \langle X_{\mathrm{tail}}(\cdot,t), -\mathcal{A}_{\mathrm{tail}}X_{\mathrm{tail}}(\cdot,t)+ qX_{\mathrm{tail}}(\cdot,t) \rangle \\
		~~ + 2  \langle X_{\mathrm{tail}}(\cdot,t) ,  (\text{Id}-\mathbb{P}_{N}) \psi^{\mathrm{T}}(\cdot)K\hat{w}^{N}(t) \rangle \\
		
		~~ + 2  \langle X_{\mathrm{tail}}(\cdot,t) ,  (\text{Id}-\mathbb{P}_{N})f (\cdot, w(t) + \psi^{\mathrm{T}}(\cdot)  {\bf u}(t)) \rangle \\
		
~~+  \int_{\mathcal{O}}|(\text{Id}-\mathbb{P}_{N})g (x,z(x,t)) |^2 \mathrm{d}x \\
	= \sum_{n=N+1}^{\infty} [2 (-\lambda_{n}+q)w_{n}^{2}(t) - 2  w_{n}(t) {\bf b}_{n}K\hat{w}^{N}(t)] \\
	~~	 + \sum_{n=N+1}^{\infty}2 w_{n}(t)[\hat{f}_{n}(t)+\bar{f}_n(t)]  +  \sum_{n=N+1}^{\infty}g_n^2(t).
	\end{array} 
\end{equation}
By Young's inequality, we have for $\alpha>0$:
\begin{equation}\label{dyeq4747rd00}
	\begin{array}{ll}
		-\sum_{n=N+1}^{\infty}2  w_{n}{\bf b}_{n}K\hat{w}^{N} \\
		\leq  \alpha  \sum_{n=N+1}^{\infty}  w^{2}_{n} + \frac{1}{\alpha}  \sum_{n=N+1}^{\infty} |{\bf b}_{n}|^2|K\hat{w}^{N}|^{2}\\
		= \alpha  \sum_{n=N+1}^{\infty}  w_{n}^{2}+ \frac{\|\psi\|^2_{N}}{\alpha} |K\hat{w}^{N}|^{2},
	\end{array}
\end{equation}
where $\|\psi\|^2_{N} = \sum_{n=N+1}^{\infty} |{\bf b}_{n}|^2$. 
For $V_{P_1}$ and $V_{P_2}$, calculating the generator $\mathcal{L}$  along \eqref{eqXX0000aa} and \eqref{eqXX0011aa} (see \cite[P. 149]{klebaner2012introduction}), we get \vspace{-0.25cm}
\begin{equation}\label{dotVobrd1}
 \begin{array}{ll}
		\mathcal{L} V_{P_1}+2\delta V_{P_1} = (\hat{w}^{N})^{\mathrm{T}} [P_{1}(\tilde{A}- \tilde{\bf B}K) 
		 +(\tilde{A}- \tilde{\bf B}K)^{\mathrm{T}}P_1 \\
		~~~ +2\delta P_{1} ]\hat{w}^{N}  +2(\hat{w}^{N})^{\mathrm{T}}P_{1}{\bf I}_1 (\hat{f}^{N} + L[{\bf C}e^{N} +\zeta]),\\
		\mathcal{L} V_{P_2}+2\delta V_{P_2} = (e^{N})^{\mathrm{T}} [P_{2}(A - L{\bf C}) +(A - L{\bf C})^{\mathrm{T}}P_2 \\
		 ~~~ +2\delta P_{2} ]e^{N} + 2(e^{N})^{\mathrm{T}}P_{2}[\bar{f}^{N} - L\zeta] + (g^{N})^{\mathrm{T}}{P_2} g^{N}. \vspace{-0.1cm}
	\end{array} 
\end{equation}
From \eqref{sigma0}, we have
\begin{equation}\label{sigma0111aa}
\setlength{\arraycolsep}{0.01pt}			\begin{array}{ll}
 \sum_{n=1}^{\infty}\hat{f}_n^2(t)=\int_{0}^{1}| f(x,\hat{w}(x,t)+ \psi^{\mathrm{T}}(x){\bf u}(t) ) |^2\mathrm{d}x \\
 \leq \sigma_f^2\int_{0}^{1}| \hat{w}(x,t)+ \psi^{\mathrm{T}}(x){\bf u}(t) |^2\mathrm{d}x  = \sigma_f^2 |\hat{w}^N(t) |^2_{\Lambda_2}   , \\		
 \sum_{n=1}^{\infty}\bar{f}_n^2(t)
 \leq \sigma_f^2 \int_{0}^{1}|w(x,t) - \hat{w}(x,t) |^2\mathrm{d}x \\
 =\sigma_f^2 |e^{N}(t)|^2 + \sigma_f^2 \sum_{n=N+1}^{\infty}w_n^{2}(t) ,\\

 \sum_{n=1}^{\infty}g_n^2(t) \leq \sigma_g^2\int_{0}^{1}| w(x,t)+ \psi^{\mathrm{T}}(x){\bf u}(t) |^2\mathrm{d}x \\
  \leq 2\sigma_g^2 \left| \Lambda_3 \hat{w}^{N}(t)+ \Lambda_4 e^{N}(t)  \right|^2 +  2\sigma_g^2 \sum_{n=N+1}^{\infty}w_n^{2}(t),
		\end{array}
\end{equation}
where 
\begin{equation*}
	\begin{array}{ll}
	\Lambda_2=  \left[ \begin{array}{ccc}
		\Psi & -{\bf B}^{\mathrm{T}} \\
		-{\bf B} & I_N
		\end{array} \right], ~~ \Psi= ( \int_{\mathcal{O}}\psi_i(x)\psi_j(x)\mathrm{d}x )_{d\times d},\\
		\Lambda_3=  \left[ \begin{array}{ccc}
		\Psi^{\frac{1}{2}} & 0  \\
		0 & I_N
		\end{array} \right], ~~\Lambda_4=  \left[ \begin{array}{ccc}
		  0_{d\times N} \\
		 I_N
		\end{array} \right].
	\end{array}
\end{equation*}


Let $\eta=\mathrm{col}\{\hat{w}^{N},e^{N},\zeta, \hat{f}^N, \bar{f}^{N}\}$.
By \eqref{directeq6aabb}, \eqref{dyeq4747rd00}, \eqref{dotVobrd1},  and using S-Procedure and \eqref{eq3333rd}, \eqref{sigma0111aa},  we obtain for $\beta_i>0$, $i=1,2,3,4$,
\begin{equation}\label{dotVob000}
	\begin{array}{ll}
		\mathcal{L}V(t)+2\delta V(t) +\beta_1 [\kappa_{N} \sum_{n=N+1}^{\infty}\lambda_{n} w_{n}^{2}(t) -|\zeta(t)|^{2} ]\\
		~~~ +\beta_2 [\sigma_f^2 |e^{N}(t)|^2 + \sigma_f^2 \sum_{n=N+1}^{\infty}w_n^{2}(t)- \sum_{n=1}^{\infty}\bar{f}_n^2(t)] \\
         ~~~+\beta_3 [ \sigma_f^2 | \hat{w}^N(t) |^2_{\Lambda_2}  -\sum_{n=1}^{\infty}\hat{f}_n^2(t)]\\
		 ~~~+\beta_4 [ 2\sigma_g^2 | \Lambda_3\hat{w}^{N}(t) +  \Lambda_4 e^{N}(t)|^2  \\
~~~+  2\sigma_g^2 \sum_{n=N+1}^{\infty}w_n^{2}(t)  - \sum_{n=1}^{\infty} g_n^2(t)]\\
	\leq  \eta^{\mathrm{T}}(t) \Theta	\eta(t)   +(g^{N}(t))^{\mathrm{T}}(P_2 -\beta_4 I) g^{N}(t) \\
~~~ +\sum_{n=N+1}^{\infty}   g_n^{\mathrm{T}}(t)(I-\beta_4 I)g_n(t)   \\
~~~ +  \sum_{n=N+1}^{\infty} {\small \left[ \begin{array}{cc}
		w_n (t) \\
		\hat{f}_n (t)\\
\bar{f}_n (t)
		\end{array} \right]^{\mathrm{T}} \Gamma_n \left[ \begin{array}{cc}
		w_n (t) \\
		\hat{f}_n (t)\\
\bar{f}_n(t)
		\end{array} \right] }<0,  
	\end{array}
\end{equation}
provided  
\begin{subequations}\label{eq50}
 \begin{align}
&\Gamma_n := \left[ \begin{array}{ccc}
		\Upsilon_n & 1 & 1 \\
		*& -\beta_3 & 0 \\
*&*& -\beta_2
		\end{array} \right] <0 ,~~n>N, \label{eq50aa}\\
& P_2 -\beta_4 I<0, ~~~~~~1-\beta_4 <0,  \label{eq50bb} \\
&\Theta: ={ \setlength{\arraycolsep}{3pt} \left[ \begin{array}{cccccc}
		\Theta_{11} &P_{1}{\bf I}_1{L}{\bf C}  & P_{1}{\bf I}_1L  & P_{1}{\bf I}_1 & 0 \\
		*& \Theta_{22} &-P_2 L &  0 & P_2 \\
		*& * &-\beta_1 I & 0& 0 \\
		*&*&*& -\beta_3 I & 0\\
		*&*&*&*& -\beta_{2}I
		\end{array} \right] }      \nonumber \\
&+ 2\beta_4 \sigma_g^2 [\Lambda_3, \Lambda_4, 0, 0, 0]^{\mathrm{T}}[\Lambda_3, \Lambda_4, 0, 0, 0]<0, \label{eq50cc}
\end{align}
\end{subequations}
where 
\begin{equation*}
	\begin{array}{ll}
		\Upsilon_n = -2(\lambda_{n}-q-\delta)  + \alpha   +  \beta_1\kappa_{N}\lambda_{n}\\
		~~~~~~~ +\beta_2\sigma_f^2+2\beta_4\sigma_g^2 ,\\
		\Theta_{11}=P_{1}(\tilde{A}- \tilde{\bf B}K)+(\tilde{A}- \tilde{\bf B}K)^{\mathrm{T}}P_1  \\
~~~~~~~ +2\delta P_{1}  +  \frac{1}{\alpha}\|\psi\|^{2}_{N} K^{\mathrm{T}}K + \beta_3\sigma_f^2 \Lambda_2  , \\
\Theta_{22}= P_{2}(A- L{\bf C})+(A- L{\bf C})^{\mathrm{T}}P_2 \\
 ~~~~~~~ + 2\delta P_2  + \beta_2\sigma_f^2I.   
	\end{array} 
\end{equation*}
By employing It\^{o}'s formula for $\mathrm{e}^{2\delta t}V_{P_1}(t)$, $\mathrm{e}^{2\delta t}V_{P_2}(t)$, along stochastic ODEs \eqref{eqXX0000aa}, \eqref{eqXX0011aa} (see \cite[Theorem 4.18]{klebaner2012introduction}), and taking expectation on both sides, we have  
\begin{equation}\label{ito1}
	\begin{array}{ll}
		 \mathbb{E} [\mathrm{e}^{2\delta t}V_{P_k}(t)]= \int_{0}^{t} \mathrm{e}^{2\delta s}\mathbb{E}[\mathcal{L}V_{P_k}(s)+ 2\delta V_{P_k}(s)]\mathrm{d}s \\
~~~~~~~~~~~~~~~~~~~~~ +\mathrm{e}^{2\delta t}V_{P_k}(0),~~ k=1,2.  
	\end{array}
\end{equation}
Employing It\^{o}'s formula for $\mathrm{e}^{2\delta t}V_{\mathrm{tail}}(t)$ along \eqref{eqXX0022aa} (see \cite[Theorem 7.2.1]{chow2007stochastic}) and taking expectation on both sides, we arrive at  
\begin{equation}\label{ito2}
	\begin{array}{ll}
		\mathbb{E} [\mathrm{e}^{2\delta t}V_{\mathrm{tail}}(t)]= V_{\mathrm{tail}}(0) \\
+\int_{0}^{t} \mathrm{e}^{2\delta s}\mathbb{E}[\mathcal{L}V_{\mathrm{tail}}(s)+ 2\delta V_{\mathrm{tail}}(s)]\mathrm{d}s.  
	\end{array}
\end{equation}
Using \eqref{ito1} and \eqref{ito2}, we obtain
\begin{equation}\label{eq49}
	\begin{array}{ll}
		\mathrm{e}^{2\delta t}\mathbb{E}V(t)=V(0) +  \int_{0}^{t} \mathrm{e}^{2\delta s}\mathbb{E}[\mathcal{L}V(s)+ 2\delta V(s)  ]\mathrm{d}s \\
		\overset{\eqref{dotVob000}}{\leq} V(0),~t\geq 0,  
	\end{array}
\end{equation}
which together with \eqref{triangleineq} implies
\begin{equation}\label{stability}
	\begin{array}{ll}
		\mathbb{E}[|{\bf u}(t)|^{2} + \|w(\cdot,t)\|_{L^{2}}^{2}+ |e^N(t)|^2]\\
		\leq D_0\mathrm{e}^{-2\delta t}\|z_{0}\|_{L^{2}}^{2},~~t\geq 0,
	\end{array}	
	\end{equation}
for some $D_0\geq 1$.

For controller design, we introduce $Q_1=P_{1}^{-1}$, $Q_2=P_{2}^{-1}$, $Y=KQ_{1}$, $\tilde{\beta}_i=\frac{1}{\beta_i}$, $i=2,3,4$.
Multiplying $\Theta$ in \eqref{eq50cc}  by  $\mathrm{diag}\{Q_{1}, Q_{2}, I, \tilde{\beta}_3I, \tilde{\beta}_2I\}$ from the right and the left, and using Schur complement, we find that \eqref{eq50cc}  holds iff
\begin{equation}\label{eq4343}
		{\small \left[ \begin{array}{cccccccccc}
		\bar{\Theta}_{11} & \bar{\Theta}_{12} & {\bf I}_1{L} & \tilde{\beta}_3{\bf I}_1 & 0 & \bar{\Theta}_{16}  & \bar{\Theta}_{17}  & 0 \\
		* & \bar{\Theta}_{22}  & -L & 0  & \tilde{\beta}_2 I & \bar{\Theta}_{26}  &0& \sigma_f Q_2 \\
		*& *& -\beta_1 I & 0 & 0& 0 &0& 0 \\
		*&*&*& -\tilde{\beta}_3 I & 0& 0 &0& 0\\
		 *&*&*& *& -\tilde{\beta}_2 I & 0 &0& 0\\
		 *&*&*& *& *& -\frac{\tilde{\beta}_4}{2} &0& 0 \\
		 *&*&*& *& *& *& \bar{\Theta}_{77} & 0 \\
		  *&*&*& *& *& *& *& -\tilde{\beta}_2 I
		\end{array} \right]<0, }
\end{equation}
where
\begin{equation*}
	 \begin{array}{ll}
\bar{\Theta}_{11}=\tilde{A}^{\mathrm{T}}Q_1 + Q_1 \tilde{A}- \tilde{\bf B}Y-Y^{\mathrm{T}}\tilde{\bf B}^{\mathrm{T}} +2\delta Q_1, \\
\bar{\Theta}_{12} ={\bf I}_1{L}{\bf C}Q_2, ~ \bar{\Theta}_{16}=\sigma_g Q_1 \Lambda_3^{\mathrm{T}} , ~\bar{\Theta}_{26}=\sigma_g Q_2 \Lambda_4^{\mathrm{T}} , \\
\bar{\Theta}_{22}= (A - L{\bf C})Q_2+Q_2(A - L{\bf C})^{\mathrm{T}} +2\delta Q_{2},\\
\bar{\Theta}_{17} =[\sigma_f Q_1\Lambda_2^{\frac{1}{2}}, Y^{\mathrm{T}} ], ~\bar{\Theta}_{77}=-\mathrm{diag}\{\tilde{\beta}_3 I, \alpha\|\psi\|^{-2}_{N} I \}.
	\end{array}
\end{equation*}
By using Schur complement, \eqref{eq50aa} and \eqref{eq50bb} hold iff
\begin{eqnarray}\label{eq44}
	\begin{array}{ll}
		\tilde{\beta}_4 I -Q_2 <0, ~~\tilde{\beta}_4-1<0, \\
		 \left[ \begin{array}{cccccc}
		\tilde{\Upsilon} & \tilde{\beta}_3 & \tilde{\beta}_2 & \sigma_f & \sigma_{g} \\
		*& -\tilde{\beta}_3 & 0 & 0 & 0 \\
*&*& -\tilde{\beta}_2 & 0& 0 \\
*&*&*& -\tilde{\beta}_2 & 0 \\
*&*&*&*& -\frac{\tilde{\beta}_4}{2}
		\end{array} \right] <0,\\
		\tilde{\Upsilon} =-2\lambda_{N+1}+2q+2\delta  + \alpha   +  \beta_1\kappa_{N}\lambda_{N+1}.
	\end{array}
\end{eqnarray}
If LMIs \eqref{eq4343} and \eqref{eq44} are feasible, the controller gain is obtained by $K=YQ_{1}^{-1}$.
Summarizing, we have:
 \begin{theorem}\label{thm1}
	Let $N_0\in\mathbb{N}$ satisfy \eqref{NNNed}, $N\geq N_0$, and $d$ be the maximum geometric multiplicity of $\{\lambda_{n}\}_{n=1}^{N_0}$.
	 Consider \eqref{eq19neumannrd}, \eqref{eq20neumannrd} with $f,g$ satisfying \eqref{sigma0}, initial value $z_0\in\mathcal{D}(\mathcal{A})$, control law \eqref{controllawrd}, boundary measurement \eqref{measurementnewrd}, and observer \eqref{observer22300rd}. 
	Choose the shape functions $\{b_{j}\}_{j=1}^{N_0}$, $\{c_{j}\}_{j=1}^{N_0}$ such that \eqref{eq20dy} holds. Let $L_0$ be obtained from \eqref{observergainfulldy}.  Let there exist scalars $\alpha>0, \beta_1>0$, $\tilde{\beta}_i>0$, $i=2,3,4$, matrices $0<Q_1\in \mathbb{R}^{(N+d)\times (N+d)}$, $0<Q_2\in \mathbb{R}^{N\times N}$ and $Y\in \mathbb{R}^{d \times (N+d)}$ such that LMIs  \eqref{eq4343} and \eqref{eq44} are feasible.
	Then the above system with controller gain $K=YQ_{1}^{-1}$ satisfies \eqref{stability} for some $D_0>1$.
\end{theorem}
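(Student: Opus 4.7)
The plan is to carry out a direct Lyapunov argument for the closed-loop system \eqref{eqXX00aa}, using the composite functional $V=V_{P_1}+V_{P_2}+V_{\mathrm{tail}}$ from \eqref{FSVdy}. First I would verify the coercivity sandwich \eqref{triangleineq}, which follows from Parseval's identity applied to $w=\mathbb{P}_N w+X_{\mathrm{tail}}$, the positivity of $P_1,P_2$, and the invertible change of variables $(\hat w^N,e^N)\mapsto (w_1,\dots,w_N,e^N)$ together with the embedding of $|{\bf u}|^2$ inside $|\hat w^N|^2_{P_1}$.

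Next, following the computations already collected in \eqref{directeq6aabb}, \eqref{dyeq4747rd00} and \eqref{dotVobrd1}, I would assemble $\mathcal{L}V+2\delta V$ using the generator formulas from \cite[P.~149]{klebaner2012introduction} for the stochastic ODEs and from \cite[P.~228]{chow2007stochastic} for the tail PDE. The cross term coupling $X_{\mathrm{tail}}$ to the controller is absorbed via Young's inequality with the free parameter $\alpha>0$ as in \eqref{dyeq4747rd00}. I would then apply the S-procedure with multipliers $\beta_1,\beta_2,\beta_3,\beta_4>0$ to absorb (i) the output-residue bound \eqref{eq3333rd}, and (ii) the Lipschitz bounds on $\hat f^N,\bar f^N,g^N$ from \eqref{sigma0111aa}. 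The outcome is precisely the decomposition \eqref{dotVob000}: the expression splits into a finite-dimensional quadratic form $\eta^{\mathrm{T}}\Theta\eta$, a sum over tail modes $n>N$ of quadratic forms in $[w_n,\hat f_n,\bar f_n]$ with matrix $\Gamma_n$, and two noise-related diagonal terms controlled by $P_2-\beta_4 I$ and $I-\beta_4 I$. Negative-definiteness of each piece gives the three families of conditions \eqref{eq50aa}--\eqref{eq50cc}.

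The heart of the argument is to show equivalence between the design LMIs \eqref{eq4343}, \eqref{eq44} and the sufficient conditions \eqref{eq50}. For \eqref{eq50cc} I would congruence-transform by $\mathrm{diag}\{Q_1,Q_2,I,\tilde\beta_3 I,\tilde\beta_2 I\}$ with $Q_i=P_i^{-1}$, $Y=KQ_1$, $\tilde\beta_i=1/\beta_i$, and then remove the three matrix-valued nonlinear contributions $2\beta_4\sigma_g^2[\Lambda_3,\Lambda_4,0,0,0]^{\mathrm{T}}[\cdot]$, $\tfrac{1}{\alpha}\|\psi\|_N^2 K^{\mathrm{T}}K$ and $\beta_3\sigma_f^2\Lambda_2$ via successive Schur complements; these correspond respectively to the new blocks $\bar\Theta_{16}$, $\bar\Theta_{17}$, $\bar\Theta_{77}$ in \eqref{eq4343}. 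For the infinite family \eqref{eq50aa}, monotonicity of $\Upsilon_n$ in $-\lambda_n$ together with the fact that $\kappa_N\lambda_n$ enters with positive coefficient $\beta_1$ shows that the worst case is at $n=N+1$; a final Schur complement against the multipliers $\beta_2,\beta_3$ reduces the infinite LMI family to the single LMI \eqref{eq44}, while \eqref{eq50bb} becomes $\tilde\beta_4 I-Q_2<0$ and $\tilde\beta_4<1$.

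Once \eqref{dotVob000} is established, I would apply It\^o's formula to $e^{2\delta t}V_{P_1}(t)$, $e^{2\delta t}V_{P_2}(t)$ and $e^{2\delta t}V_{\mathrm{tail}}(t)$ along the respective stochastic dynamics, take expectations to kill the martingale part as in \eqref{ito1}, \eqref{ito2}, and obtain the monotone bound \eqref{eq49}. Combining with \eqref{triangleineq} yields \eqref{stability} with $D_0=\alpha_{\max}/\alpha_{\min}$. The main technical obstacle I anticipate is the bookkeeping inside the Schur-complement chain that transforms \eqref{eq50cc} into \eqref{eq4343}: several matrix-valued nonlinear quantities (in $K$, in the mass matrix $\Lambda_2$, and in the noise blocks $\Lambda_3,\Lambda_4$) must be linearised simultaneously, and I must verify that the final LMI is strictly equivalent to $\Theta<0$ rather than merely sufficient, so that feasibility of \eqref{eq4343} genuinely certifies the Lyapunov inequality.
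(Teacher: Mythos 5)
Your proposal is correct and follows essentially the same route as the paper: the composite Lyapunov functional \eqref{FSVdy}, the generator computations \eqref{directeq6aabb}--\eqref{dotVobrd1}, Young's inequality with $\alpha$, the S-procedure leading to \eqref{dotVob000} and the conditions \eqref{eq50}, the congruence/Schur-complement chain yielding \eqref{eq4343}--\eqref{eq44}, and finally It\^o's formula plus expectation to get \eqref{eq49} and \eqref{stability}. The only point to state slightly more carefully is the reduction of the infinite family \eqref{eq50aa} to $n=N+1$: this requires the net coefficient $-2+\beta_1\kappa_N$ of $\lambda_n$ in $\Upsilon_n$ to be negative (which feasibility of \eqref{eq44} enforces), rather than the positivity of $\beta_1\kappa_N\lambda_n$ per se.
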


\begin{remark}\label{remark2}
In Theorem \ref{thm1}, we provide LMIs for finding the observer dimension $N$ and the upper bounds on the Lipschitz constants $\sigma_f, \sigma_g$. We claim that
the LMIs \eqref{eq4343} and \eqref{eq44} (equivalently, matrix inequalities \eqref{eq50}) are always feasible for large enough $N$.
To show the feasibility,
we choose $\beta_2=\beta_3=\frac{1}{\sigma_f}$, $\beta_4=\frac{1}{\sigma_g}$, and let $\sigma_f,\sigma_g\to 0^+$. We find that inequalities \eqref{eq50} hold if  
\begin{subequations}\label{eq44abcd}
\setlength{\arraycolsep}{3pt} \begin{align}
	& -2\lambda_{N+1}+2q+2\delta  + \alpha   +  \beta_1\kappa_{N}\lambda_{N+1}<0,  \label{eq44abcdaa}\\
	&{\small \left[ \begin{array}{cccccc}
		\Theta_{11}+ \frac{1}{\alpha}\|\psi\|^{2}_{N} K^{\mathrm{T}}K & P_{1}{\bf I}_1L{\bf C}  & P_{1}{\bf I}_1L   \\
		*& \Theta_{22} &-P_2 L  \\
		*& * &-\beta_1 I
		\end{array} \right]} <0,  \label{eq44abcdbb}
\end{align}
\end{subequations}
where $\Theta_{11}= P_{1}(\tilde{A}- \tilde{\bf B}K)+(\tilde{A}- \tilde{\bf B}K)^{\mathrm{T}}P_1+2\delta P_{1} $, 
$\Theta_{22}= P_{2}(A- L{\bf C})+(A- L{\bf C})^{\mathrm{T}}P_2 + 2\delta P_2$. 
Let $K=[K_0, 0_{d\times (N-N_0)}]$ with $K_0\in \mathbb{R}^{d\times N_0}$, $P_1=\mathrm{diag}\{\hat{P}_{1},p_{1}I_{N-N_0}\}$ and $P_2=\mathrm{diag}\{\hat{P}_{2},p_{2}I_{N-N_0}\}$  with scalars $p_1, p_2>0$ and $ 0<\hat{P}_{1}\in\mathbb{R}^{(N_{0}+d)\times (N_{0}+d)}$, $0<\hat{P}_{2}\in\mathbb{R}^{N_{0}\times N_{0}}$.
Substituting such $P_1$ and $P_2$ into \eqref{eq44abcd} and letting $p_{1}\to 0^{+}$, $p_{2}\to \infty$, since $A_1+\delta I<0$ (from \eqref{NNNed}), we find that \eqref{eq44abcdbb} holds if  
\begin{equation}\label{505050}
\setlength{\arraycolsep}{2pt}	  \begin{array}{ll}
		\left[\begin{array}{cccc}	
		 R_{1} + \| \psi\|^{2}_{N} K_{0}^{\mathrm{T}}K_{0}  & \hat{P}_{1}{\bf I}_0 L_{0}{\bf C}_{0} & \hat{P}_{1}{\bf I}_0 L_{0}  \\
	     *& R_{2} &  - \hat{P}_{2}L_{0}L_{0} \\
*&*& -\beta_1 I
	\end{array}  \right]  <0,\\
	R_{1}=\hat{P}_{1}(\tilde{A}_{0}- \tilde{\bf B}_{0}K_{0})+(\tilde{A}_{0}- \tilde{\bf B}_{0}K_{0})^{\mathrm{T}}\hat{P}_{1} +2\delta \hat{P}_{1} ,  \\
	R_{2}=\hat{P}_{2}(A_{0}-L_{0}{\bf C}_{0})+(A_{0}-L_{0}{\bf C}_{0})^{\mathrm{T}}\hat{P}_{2}+2\delta \hat{P}_{2},  
	\end{array}
\end{equation}
where ${\bf I}_0 = [0_{d \times N_0}, I_{N_0}]^{\mathrm{T}}$.
Multiplying \eqref{505050} by {\small$\mathrm{diag}\{\hat{P}_{1}^{-1},I\}$} from left and
right, and introducing {\small$S=\hat{P}_{1}^{-1}$}, we have that \eqref{505050} holds iff  
 \begin{equation}\label{505050aa}
\setlength{\arraycolsep}{3pt}	 \begin{array}{ll}
		 \left[\begin{array}{cccc}	
		 \hat{R}_{1} + \| \psi\|^{2}_{N} SK_{0}^{\mathrm{T}}K_{0}S  & {\bf I}_0 L_{0}{\bf C}_{0} & {\bf I}_0 L_{0}  \\
	     *& R_{2} &  - \hat{P}_{2}L_{0} \\
*&*& -\beta_1 I
	\end{array}  \right]   <0,\\
	\hat{R}_{1}=(\tilde{A}_{0}- \tilde{\bf B}_{0}K_{0})S+S(\tilde{A}_{0}- \tilde{\bf B}_{0}K_{0})^{\mathrm{T}} +2\delta S ,  \\
	R_{2}=\hat{P}_{2}(A_{0}-L_{0}{\bf C}_{0})+(A_{0}-L_{0}{\bf C}_{0})^{\mathrm{T}}\hat{P}_{2}+2\delta \hat{P}_{2}.  
	\end{array}
\end{equation}
 Let $S$ and $\hat{P}_{2}$ solve the Lyapunov equations:  
 \begin{equation*}
	\begin{array}{ll}
 		S(\tilde{A}_{0}- \tilde{\bf B}_{0}K_{0}+\delta I)^{\mathrm{T}}+(\tilde{A}_{0}- \tilde{\bf B}_{0}K_{0}+\delta I)S= -\chi_{1}I,\\
 		\hat{P}_{2}(A_{0}+L_{0}{\bf C}_{0}+\delta I)+(A_{0}+L_{0}{\bf C}_{0}+\delta I)^{\mathrm{T}}\hat{P}_{2}=- \chi_{2}I,  
 	\end{array} 
 \end{equation*}
where $\chi_1,\chi_2>0$ are independent of $N$ and satisfy $-\chi_{1}I+\chi_{2}^{-1}{\bf I}_0 L_{0}{\bf C}_{0}{\bf C}_{0}^{\mathrm{T}}L_{0}^{\mathrm{T}}{\bf I}_0^{\mathrm{T}}<0$. We have $\|S\|=O(1), \|\hat{P}_{2}\|=O(1)$, $N\to \infty$. Substituting above $S$ and $\hat{P}_{2}$ into \eqref{505050aa}, choosing $\beta_1=N^{\frac{1}{2M}}$, and using Proposition \ref{propo1}, we have that \eqref{eq44abcdaa} and \eqref{505050aa} are always feasible for large enough $N$. By continuity, LMIs \eqref{eq50} hold for small enough $\sigma>0$ and large enough $N>0$.
\end{remark}

\section{Noise-to-state stability for additive noise}
In this section, we consider the following system with persistent noise: 
\begin{equation}\label{eq14neumannrdNSS}
\setlength{\arraycolsep}{1pt}		\begin{array}{ll}
		\mathrm{d} z(x,t)=[\Delta z(x,t)+qz(x,t)+f(x,z(x,t))]\mathrm{d}t \\
		~~~~~~~~~~~~~ +g(x)\Sigma(t)\mathrm{d}\mathcal{W}(t), ~~ t\geq 0, ~x\in \mathcal{O}, \\
\frac{\partial z(x,t)}{\partial \nu}=u(x,t), x\in\Gamma_{1}, ~   z(x,t)=0, ~x\in\Gamma_{2}, \\
		z(x,0)=z_0(x),
	\end{array}
\end{equation}
with measurement \eqref{measurement}, where $g: \mathcal{O} \to \mathbb{R}^r$, $\Sigma(t): \mathbb{R}^r \to \mathbb{R}^r$ is treated as unknown, and $\mathcal{W}(t)$ is a standard $r$-dimensional Brownian motion defined on $(\Omega, \mathcal{F}, \mathbb{P})$. 
By considering Lyapunov function \eqref{FSVdy} and using arguments similar to Sections \ref{sec2.2}-\ref{sec2.4}, we obtain
\begin{equation}\label{dotVob111}
	\begin{array}{rr}
		\mathcal{L}V(t)+2\delta V(t) 
		 \leq \mathrm{Trace}\{(g^{N}(t))^{\mathrm{T}}P_2 g^{N}(t)\} \\
 ~~+\sum_{n=N+1}^{\infty} \mathrm{Trace}\{ g_n^{\mathrm{T}}(t)g_n(t) \},
	\end{array}
\end{equation}
provided the following LMIs are feasible:
\begin{equation}\label{noisetostate1}
	 \begin{split}
&\left[ \begin{array}{ccc}
		\hat{\Upsilon} & \tilde{\beta}_3 & 1 \\
		*& -\tilde{\beta}_3 & 0 \\
*&*& -\beta_2
		\end{array} \right]<0, \\	
&\left[ \begin{array}{cccccccccc}
		\bar{\Theta}_{11} &{\bf I}_1{L}{\bf C}  & {\bf I}_1{L} & {\bf I}_1\tilde{\beta}_3 & 0 & \bar{\Theta}_{17} \\
		* & \hat{\Theta}_{22}  & -P_2 L & 0  & -P_2 & 0  \\
		*& *& -\beta_1 I & 0 & 0& 0  \\
		*&*&*& -\tilde{\beta}_3 I & 0& 0 \\
		 *&*&*& *& -\beta_2 I & 0 \\
		 *&*&*& *& *& \bar{\Theta}_{77}
		\end{array} \right]<0, 
	\end{split}
\end{equation}
where $\bar{\Theta}_{11}, \bar{\Theta}_{17}, \bar{\Theta}_{77}$ are defined below \eqref{eq4343}, and 
\begin{equation*}
	\begin{array}{ll}
	\hat{\Upsilon} =-2\lambda_{N+1}+2q+2\delta  + \alpha   +  \beta_1\kappa_{N}\lambda_{N+1}+\beta_2\sigma_f^2,\\	
	\hat{\Theta}_{22}= P_{2}(A- L{\bf C})+(A- L{\bf C})^{\mathrm{T}}P_2 + 2\delta P_2  + \beta_2\sigma_f^2I.	
	\end{array}
\end{equation*}
In LMIs \eqref{noisetostate1}, $0<\beta_1, \beta_2, \tilde{\beta}_3\in \mathbb{R}$, $0<Q_1\in \mathbb{R}^{(N+d)\times (N+d)}$, $0<P_2\in\mathbb{R}^{N\times N}$, $Y\in \mathbb{R}^{(N+d)\times d}$ are decision variables.
By arguments similar to Remark \ref{remark2}, it is easy to conclude that LMIs \eqref{noisetostate1} are always feasible for large enough $N$ and small enough $\sigma_f>0$. 
By employing Parseval's equality, we obtain from \eqref{dotVob111} that
\begin{equation}\label{eqeq56}
 \begin{array}{ll}
  \mathcal{L}V(t) +2\delta V(t) \\
  \leq \max\{\lambda_{\max}(P_2),1\}\sum_{n=1}^{\infty} \mathrm{Trace}\{\Sigma^{\mathrm{T}}(t) g^{\mathrm{T}}_{n}(t)g_{n} (t)\Sigma(t)\} \\
 \leq \max\{\lambda_{\max}(P_2),1\} \sum_{n=1}^{\infty} \mathrm{Trace}\{g^{\mathrm{T}}_{n}(t)g_{n} (t)\} |\Sigma(t)\Sigma^{\mathrm{T}}(t)| \\
= \max\{\lambda_{\max}(P_2),1\}\mathrm{Trace}\{\int_{\mathcal{O}}g^{\mathrm{T}}(x)g(x)\mathrm{d}x\}|\Sigma(t)\Sigma^{\mathrm{T}}(t)|  \\
= \rho |\Sigma(t)\Sigma^{\mathrm{T}}(t)| ,   
\end{array}
\end{equation}
where $\rho =\max\{\lambda_{\max}(P_2),1\}\mathrm{Trace}\{\int_{\mathcal{O}}g^{\mathrm{T}}(x)g(x)\mathrm{d}x\}$. 
 By employing It\^{o}'s formulas \eqref{ito1}-\eqref{eq49}, \eqref{eqeq56}, and arguments similar to Theorem 4.1 in \cite{deng2001stabilization}, we have 
\begin{equation}\label{NSS}
\begin{split}
& \mathbb{E}[\|w(\cdot,t)\|^2_{L^2} + |{\bf u}(t)|^2+ |e^N(t)|^2] \\
& \leq D_0\mathrm{e}^{-2\delta t}\|z_0\|^2_{L^2} +\frac{\rho}{2\delta}\sup_{0\leq s\leq t}|\Sigma(s)\Sigma^{\mathrm{T}}(s)|, ~t\geq 0,  
\end{split}
\end{equation}
for some $D_0\geq 1$. This shows that the closed-loop system is mean-square noise-to-state stable. 
Summarizing, we have:
 \begin{theorem}\label{thm2}
	Let $N_0\in\mathbb{N}$ satisfy \eqref{NNNed}, $N\geq N_0$, and $d$ be the maximum geometric multiplicity of $\{\lambda_{n}\}_{n=1}^{N_0}$.
	 Consider \eqref{eq19neumannrd}, \eqref{eq20neumannrd} with $f$ satisfying \eqref{sigma0}, $g(\cdot,z) = g(\cdot)\in L^2(\mathcal{O})$, initial value $z_0\in\mathcal{D}(\mathcal{A})$, control law \eqref{controllawrd}, and boundary measurement \eqref{measurementnewrd}. Choose the shape functions $\{b_{j}\}_{j=1}^{N_0}$, $\{c_{j}\}_{j=1}^{N_0}$ such that \eqref{eq20dy} holds. Let $L_0$ be obtained from \eqref{observergainfulldy} and 
  let there exist scalars $\alpha,\beta_1,\beta_2,\tilde{\beta}_3>0$, matrices $0<Q_1\in \mathbb{R}^{(N+d)\times (N+d)}$, $0<P_2\in \mathbb{R}^{N\times N}$ and $Y\in \mathbb{R}^{d \times (N+d)}$  such that LMIs  \eqref{noisetostate1}  are feasible.	Then the above system with controller gain $K=YQ_{1}^{-1}$ satisfies \eqref{NSS} for some $D_0\geq 1$. Moreover, the LMIs  \eqref{noisetostate1} are always feasible for large enough $N$ and small enough $\sigma_f>0$.
\end{theorem}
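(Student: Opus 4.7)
The plan is to follow the architecture set up for the multiplicative-noise case in Sections \ref{sec2.2}--\ref{sec2.4}, adapt the Lyapunov bookkeeping to the additive noise $g(x)\Sigma(t)\mathrm{d}\mathcal W(t)$, and then close the estimate with an It\^o-based NSS argument in the spirit of \cite{deng2001stabilization}. First I would apply the lifting transformation \eqref{eq17neumannrdaa} and the dynamic extension \eqref{eq19neumannrd} to convert \eqref{eq14neumannrdNSS} into its homogeneous-boundary counterpart, project onto $\mathcal X_N$ and $\mathcal X_{\mathrm{tail}}$, and derive the closed-loop system mirroring \eqref{eqXX00aa}, the only difference being that the diffusion term is now $g(\cdot)\Sigma(t)$ (with no state dependence and therefore no Lipschitz gain $\sigma_g$ to absorb). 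I would then use the same Lyapunov function \eqref{FSVdy}.

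Computing $\mathcal{L}V+2\delta V$ proceeds as in \eqref{directeq6aabb}--\eqref{dotVobrd1}, applying Young's inequality to the cross term between the tail modes and $K\hat w^N$, and invoking the S-procedure with multipliers $\beta_1,\beta_2,\tilde\beta_3$ to absorb, respectively, the output residue bound \eqref{eq3333rd}, the Lipschitz bound on $\bar f^N$ and $\sum_n\bar f_n^2$ for the tail, and the Lipschitz bound on $\hat f^N$. Since $g$ is now state-independent, no $\beta_4$ multiplier nor $2\sigma_g^2|\Lambda_3\hat w^N+\Lambda_4 e^N|^2$ term appears; instead the diffusion contribution in the generator is exactly
\begin{equation*}
\mathrm{Trace}\{(g^N)^{\mathrm T}P_2 g^N\}+\sum_{n=N+1}^{\infty}\mathrm{Trace}\{g_n^{\mathrm T}g_n\}\cdot|\Sigma\Sigma^{\mathrm T}|,
\end{equation*}
which is exactly what appears on the right-hand side of \eqref{dotVob111}. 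After collecting all deterministic quadratic forms in $\eta=\mathrm{col}\{\hat w^N,e^N,\zeta,\hat f^N,\bar f^N\}$ and in the scalar tail triples $(w_n,\hat f_n,\bar f_n)$, I would apply Schur complement (with $Q_1=P_1^{-1}$, $Y=KQ_1$) to recast the two resulting matrix inequalities into the LMI form \eqref{noisetostate1}, which then guarantees \eqref{dotVob111}.

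To obtain \eqref{NSS}, I would apply It\^o's formula to $\mathrm{e}^{2\delta t}V_{P_1}$, $\mathrm{e}^{2\delta t}V_{P_2}$ along the stochastic ODEs and to $\mathrm{e}^{2\delta t}V_{\mathrm{tail}}$ along the stochastic PDE (as in \eqref{ito1}--\eqref{ito2}), take expectations (the martingale increments vanish), and combine with \eqref{dotVob111}. Using Parseval to relate $\sum_n\mathrm{Trace}\{g_n^{\mathrm T}g_n\}$ to $\int_{\mathcal O}g^{\mathrm T}(x)g(x)\mathrm{d}x$, as in \eqref{eqeq56}, gives $\mathbb E[\mathrm{e}^{2\delta t}V(t)]\le V(0)+\rho\int_0^t\mathrm{e}^{2\delta s}|\Sigma\Sigma^{\mathrm T}(s)|\mathrm d s$. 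Bounding the integral by $\sup_{0\le s\le t}|\Sigma\Sigma^{\mathrm T}(s)|\cdot\mathrm{e}^{2\delta t}/(2\delta)$ and using the two-sided norm equivalence \eqref{triangleineq} then yields \eqref{NSS}. This is the NSS variant of the Deng--Krsti\'c argument, avoiding the ISS formalism because the stochastic integral has infinite variation.

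The feasibility claim for large $N$ and small $\sigma_f>0$ follows the template of Remark \ref{remark2}. I would set $\beta_2=\tilde\beta_3=1/\sigma_f$, take block-diagonal $P_1=\mathrm{diag}\{\hat P_1,p_1 I_{N-N_0}\}$, $P_2=\mathrm{diag}\{\hat P_2,p_2 I_{N-N_0}\}$ and $K=[K_0,0]$, solve the two finite-dimensional Lyapunov equations on the first $N_0$ modes using the $L_0$ from \eqref{observergainfulldy} and a stabilizing $K_0$ for $(\tilde A_0,\tilde{\bf B}_0)$, and let $p_1\to 0^+$, $p_2\to\infty$. The only $N$-dependent coupling is the term $\|\psi\|_N^2K_0^{\mathrm T}K_0/\alpha$ on the $\hat w^N$ block and the $\beta_1\kappa_N\lambda_{N+1}$ term on the tail scalar inequality. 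I expect the main obstacle to be the tail scalar inequality: since $\lambda_{N+1}\to\infty$ only like $N^{2/M}$, one needs a careful balance. Choosing $\beta_1=N^{1/(2M)}$ and invoking Proposition \ref{propo1}, which gives $\kappa_N=O(N^{-1/M})$, makes $\beta_1\kappa_N\lambda_{N+1}=O(N^{1/(2M)+1/M})$ which is dominated by $2\lambda_{N+1}=O(N^{2/M})$ for $M\ge 2$, so \eqref{noisetostate1} holds for sufficiently large $N$. A final continuity argument in $\sigma_f$ closes the result.
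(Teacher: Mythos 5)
Your proposal follows the paper's proof essentially verbatim: the same lifting/projection and closed-loop decomposition, the same Lyapunov function \eqref{FSVdy}, the S-procedure with multipliers $\beta_1,\beta_2,\tilde\beta_3$ (and no $\beta_4$, since the diffusion is state-independent) leading to \eqref{dotVob111}, the Parseval step \eqref{eqeq56}, the It\^{o}-formula/Deng--Krsti\'c argument yielding \eqref{NSS}, and the Remark \ref{remark2}-style feasibility analysis with block-diagonal $P_1,P_2$, $\beta_1=N^{1/(2M)}$ and Proposition \ref{propo1}. One small slip in the feasibility step: you should take $\beta_3=1/\sigma_f$ (equivalently $\tilde\beta_3=\sigma_f\to 0^+$), not $\tilde\beta_3=1/\sigma_f$; the latter makes the Schur complement of the first LMI in \eqref{noisetostate1} pick up a term $+\tilde\beta_3=1/\sigma_f\to\infty$ for fixed $N$, which would destroy feasibility.
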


\begin{remark}\label{remark9}
For the linear deterministic case of system \eqref{eq14neumannrdNSS} with $M=2,3$ and $f=0, g=0$, under point in-domain measurement and Dirichlet actuation, \cite{lhachemi2023boundary} suggested a finite-dimensional output-feedback control. They suggested a scaling method to ensure the feasibility of the stability conditions as $N\to\infty$.   
We show below that this method cannot be extended to 
achieve noise-to-state stability in our present stochastic setting. Consider \eqref{eq14neumannrdNSS} with $f=0$.  Using the scaling technique from \cite{lhachemi2023boundary} yields the dynamics for $e^{N-N_0}=[e_{N_0+1},\dots,e_N]^{\mathrm{T}}$:
\begin{equation*}
    \begin{array}{ll}
        \tilde{e}^{N-N_0}(t)=\Lambda^{N-N_0}{e}^{N-N_0}(t), \\
        \mathrm{d}\tilde{e}^{N-N_0}(t)=A_1\tilde{e}^{N-N_0}(t) \mathrm{d}t + \Lambda^{N-N_0} g^{N-N_0}(t)\Sigma(t)\mathrm{d}\mathcal{W}(t),
    \end{array}
\end{equation*}
where $A_1=\mathrm{diag}\{-\lambda_{n}+q\}_{n=N_0+1}^{N}$,  $\Lambda^{N-N_0}=\mathrm{diag}\{\lambda_{n}-q\}_{n=N_0+1}^{N}$, $ g^{N-N_0}=[g_{N_0+1}^{\mathrm{T}}, \dots g_N^{\mathrm{T}} ]^{\mathrm{T}}$. Applying It\^{o}'s formula to $|\tilde{e}^{N-N_0}(t)|_{P_1}^2$ with $0<P_1\in\mathbb{R}^{(N-N_0)\times (N-N_0)}$ (noting that the $P_1$ involved in $P$ in (37) of \cite{lhachemi2023boundary} must satisfy $|P_1|=O(1)$, $N\to\infty$), produces a quadratic term $|\Lambda^{N-N_0}g^{N-N_0}(t)\Sigma(t)|_{P_1}^2$. The latter is not uniformly bounded and diverges as $N\to\infty$. This demonstrates that the scaling method in \cite{lhachemi2023boundary} cannot simultaneously guarantee the noise-to-state stability of system \eqref{eq14neumannrdNSS} and 
the feasibility of the sufficient conditions as  $N\to\infty$.
\end{remark}

\section{Numerical examples}\label{example}
In this section, we consider 2D and 3D cases, respectively to demonstrate the efficiency of our method. First, for the 2D case, i.e., $\mathcal{O}=[0,a_1]\times [0,a_2]$, we set $a_{1}= \sqrt{1.5}$, $a_{2}=1$, and $q=28.8$, which results in an unstable open-loop system with at least $3$ unstable modes ({$\lambda_{1}<\lambda_{2}=\lambda_{3}<q$}). 
Choosing $\delta=10^{-3}$, $N_{0}=3$, we have the maximum geometric multiplicity of $\{\lambda_{n}\}_{n=1}^{3}$ as $d=2$. 
Considering $c_{1}(x_1)=\frac{1}{\sqrt{a_1a_2}}\sin(\frac{\pi x_1}{a_1})$, $c_{2}(x_1)=\frac{0.1}{\sqrt{a_1a_2}}\sin(\frac{2\pi x_1}{a_1})$, $x_1\in[0,a_1]$, we have ${\bf C}_{0}= {\scriptsize \left[
               \begin{array}{cccc}
                1 &0 &1 \\
                0 &0.1 & 0
               \end{array}
             \right]}$, which satisfies \eqref{eq20dy}.
             The observer gain {$L_0$} is found from \eqref{observergainfulldy} 
and is given by 
\begin{equation*}
	L_0= \left[ \begin{array}{cccc}
		  55.0284   &      0 \\
         0 & 15.0583 \\
    -2.4175  &       0
		\end{array}  \right]. 
\end{equation*}
We consider the shape functions $b_1$, $b_2$ and the corresponding $\psi_1$, $\psi_2$ as follows:
\begin{equation}\label{shapeb}
 \begin{array}{ll}
b_{1}(x_1)=\frac{1}{\sqrt{a^*_1a^*_2}}\sin(\frac{\pi x_1}{a^*_1})\chi_{[0,a^*_1]}(x_1), ~a^*_1=0.8a_1, \\ 
b_{2}(x_1)=\frac{1.2}{\sqrt{a^*_1a_2}}\sin(\frac{2\pi x_1}{a^*_1})\chi_{[0,a^*_1]}(x_1), ~a^*_2=0.8a_2,\\
\psi_1(x)=-\frac{a^*_2}{\pi} b_{1}(x_1)\sin(\frac{\pi x_2}{a_2^*})\chi_{[0,a^*_2]}(x_2), \\
\psi_2(x)=-\frac{a_2}{\pi} b_{2}(x_1)\sin(\frac{\pi x_2}{a_2}),
\end{array} 
\end{equation}
where $\chi_{[0,a_1^*]}(\cdot)$ is an indicator function.  
For comparison, we also consider $b_1$, $b_2$ in the form of eigenfunctions as in \cite{meng2022boundary,wang2025sampled} with the corresponding $\psi_1$, $\psi_2$ as follows:
\begin{equation}\label{shapeb1}
 \begin{array}{ll}
 b_{1}(x_1)=\frac{1}{\sqrt{a_1a_2}}\sin(\frac{\pi x_1}{a_1}), b_{2}(x_1)=\frac{1.2}{\sqrt{a_1a_2}}\sin(\frac{2\pi x_1}{a_1}), \\
 \psi_i(x)=-\frac{a_2}{\pi} b_{1}(x_1)\sin(\frac{\pi x_2}{a_2}),~~i=1,2.
\end{array} 
\end{equation}
It is easy to check that for both \eqref{shapeb} and \eqref{shapeb1}, the corresponding ${\bf B}_0 $ satisfies \eqref{eq20dy}.
We fix $\sigma_f=0.1$ ($\sigma_g=0.05$, respectively) and verify LMIs \eqref{eq4343} and \eqref{eq44} for finding the maximum admissible $\sigma_f$ ($\sigma_g$, respectively) that preserves feasibility for the mean-square exponential stability.
Additionally, we verify LMIs \eqref{noisetostate1} for finding the maximum admissible $\sigma_f$ that guarantees the NSS and the NSS in probability. 
The results are summarized in Table \ref{tab1}.
From Table \ref{tab1}, we see that the shape functions \eqref{shapeb} always allow for larger Lipschitz constants than the shape functions \eqref{shapeb1} which are from \cite{meng2022boundary,wang2025sampled}. 
 \begin{table*}[!htb]
    \caption{2D case for $\sigma_f^{\max}$ (maximum admissible $\sigma_f$), $\sigma_g^{\max}$ (maximum admissible $\sigma_g$) from LMIs \eqref{eq4343}, \eqref{eq44} (Theorem \ref{thm1}) and $\sigma_f^{\max}$ from LMIs \eqref{noisetostate1} (Theorem \ref{thm2}): Shape functions \eqref{shapeb} VS \eqref{shapeb1}.}
    \centering
 \begin{tabular}{c|ccc|ccc} \hline

$N$   &  \multicolumn{3}{c|}{\makecell[c]{   Shape functions \eqref{shapeb}  \\ \hline   $\sigma_f=0.1$ ~~ $\sigma_g=0.05$ ~~ Thm \ref{thm2} \\ \hline ~ $\sigma^{\max}_g$ ~~~~~ $\sigma^{\max}_f$ ~~~~~~ $\sigma^{\max}_f$ } } 

    &  \multicolumn{3}{c}{\makecell[c]{  Shape functions \eqref{shapeb1}  \\  \hline  $\sigma_f=0.1$ ~~ $\sigma_g=0.05$ ~~ Thm \ref{thm2} \\ \hline ~ $\sigma^{\max}_g$ ~~~~~ $\sigma^{\max}_f$ ~~~~~~ $\sigma^{\max}_f$ } }  \\ 
    \cline{1-7}

9     & ~~ 0.156   & ~~~ 0.33  & 0.37   &  ~~ 0.035   & ~~~ -- & 0.24\\
11    & ~~ 0.180 & ~~~ 0.40 & 0.43  &  ~~ 0.039   & ~~~ -- & 0.25 \\
13    & ~~ 0.207 & ~~~ 0.48 & 0.52  &  ~~ 0.076   & ~~~ 0.29 & 0.41\\
15    & ~~ 0.211 & ~~~ 0.49 & 0.54  &  ~~ 0.077   & ~~~ 0.30 & 0.42\\
17    & ~~ 0.218 & ~~~ 0.52 & 0.56  &  ~~ 0.080   & ~~~ 0.32  & 0.44    \\
19    & ~~ 0.229 & ~~~ 0.55 & 0.60  &  ~~ 0.086   & ~~~ 0.37 & 0.50\\ \hline
\end{tabular}   \label{tab1}
\end{table*}

For simulations, we consider $f(x,z)=0.33\sin z$ and $g(x,z)=0.05\sin z$ for the multiplicative noise. For the additive noise, we take $g\equiv 1$, $f(x,z)=0.37\sin z$, and $\Sigma(t)\equiv 0.5$, $\Sigma(t)\equiv 1$. 
Take the initial condition $z_0(x)= x_1(\frac{x_1}{a_1}-1)((\frac{x_2}{a_2})^3-(\frac{x_2}{a_2})^2)$ and $N=9$.
The simulation was carried out by using the forward time centered space method and the Euler-Maruyama method with time step $0.00005$ and space step $0.02\times 0.02$.
The simulation results are presented in Fig. \ref{fig1} for the mean-square stability, in Fig. \ref{fig3} for the NSS. The simulation results confirm the theoretical analysis.
Especially, for the exponential stability, the stability in simulations was preserved for larger $\sigma_f= 0.69$ (compared to theoretical value $\sigma_f= 0.33$), which may indicate that our approach is somewhat conservative in this example.
\begin{figure*}[H]
\centerline{\includegraphics[width=10cm]{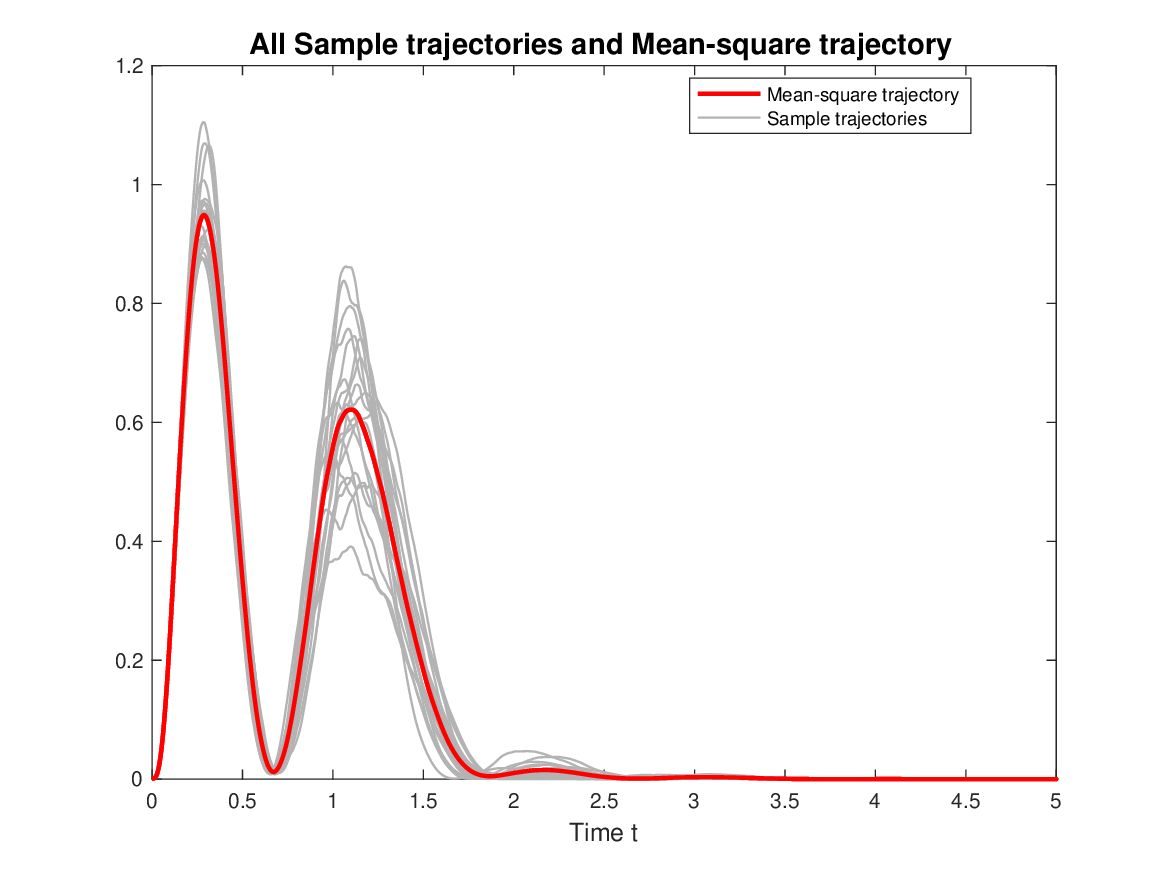}}   
\caption{ The evolution of $\mathbb{E}[\|w(\cdot, t)\|_{L^2}^2+|{\bf u}(t)|^2]$ (The expectation is competed averaging over 20 sample trajectories).}
\label{fig1}
\end{figure*}

\begin{figure*}[H]
\centerline{\includegraphics[width=15cm]{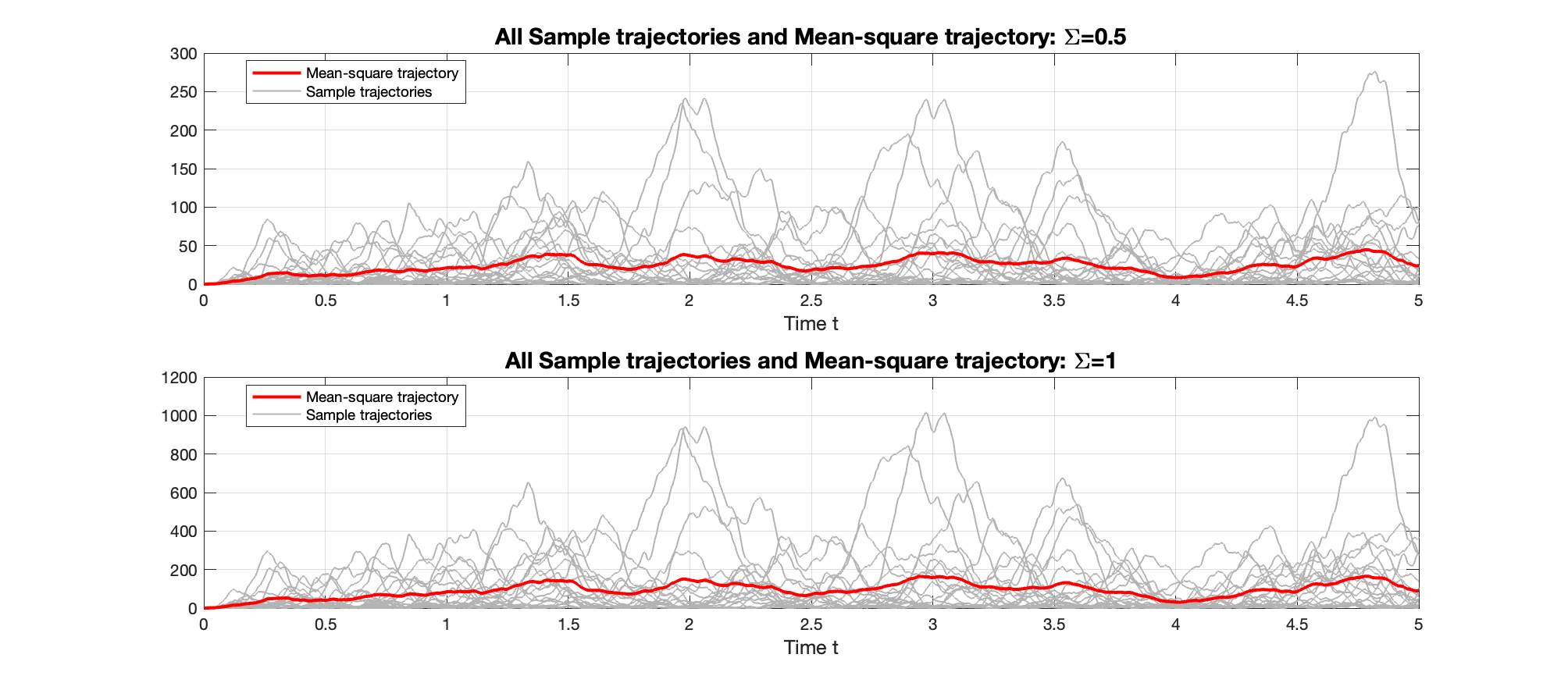}}  
\caption{ The evolution of $\mathbb{E}[\|w(\cdot, t)\|_{L^2}^2+|e^N(t)|^2+|{\bf u}(t)|^2]$ (The expectation is competed averaging over 20 sample trajectories).}
\label{fig3}
\end{figure*}

Next, we consider the 3D case with $\mathcal{O}=[0,a_1]\times [0,a_2]\times [0,a_3]$, where $a_{1}=a_{2}= \sqrt{1.5}$, $a_{3}=1$.
Consider $q=35.37$, which results in an unstable open-loop system with at least $4$ unstable modes ({$\lambda_{1}<\lambda_{2}=\lambda_{3}=\lambda_{4}<q$}).
Choosing $\delta=10^{-3}$, $N_{0}=4$, we have $d=3$. Consider the shape functions in measurement as 
\begin{equation*}
	\begin{array}{ll}
	c_{1}(x_1,x_2)=\frac{1}{\sqrt{a_1a_2a_3}}\sin(\frac{\pi x_1}{a_1})\sin(\frac{\pi x_2}{a_2}), \\
	c_{2}(x_1,x_2)=\frac{0.2}{\sqrt{a_1a_2a_3}}\sin(\frac{2\pi x_1}{a_1})\sin(\frac{\pi x_2}{a_2}), \\
	c_{3}(x_1,x_2)=\frac{0.2}{\sqrt{a_1a_2a_3}}\sin(\frac{\pi x_1}{a_1})\sin(\frac{2\pi x_2}{a_2})	
	\end{array}
\end{equation*}
 and the shape functions $b_i$ in controller and corresponding $\psi_i$, $i=1,2,3$, as follows:
\begin{equation*}
{\small \begin{array}{ll}
 b_{1}(x_1,x_2)=4(2a_1a_2 a_{3})^{-\frac{1}{2}}\sin(\frac{\pi x_1}{a_1})\sin(\frac{\pi x_2}{a_2}), \\
 b_{2}(x_1,x_2)=0.4(2a_1a_2a_{3})^{-\frac{1}{2}}\sin(\frac{2\pi x_1}{a_1})\sin(\frac{\pi x_2}{a_2}),\\
 b_{3}(x_1,x_2)=0.4(2a_1a_2a^*_{3})^{-\frac{1}{2}}\sin(\frac{\pi x_1}{a_1})\sin(\frac{2\pi x_2}{a_2}), \\
 \psi_i(x)=-\frac{a_{3}}{\pi}b_i(x_1,x_2)\sin(\frac{\pi x_3}{a_{3}}), ~i=1,2, \\
 \psi_3(x)=-\frac{a^*_{3}}{\pi}b_3(x_1,x_2)\sin(\frac{\pi x_3}{a^*_{3}})\chi_{[0,a^*_{3}]}(x_3), a^*_{3}=0.8a_3.
\end{array} }
\end{equation*}
Note that if we take $a^*_3=a_3$, the shape functions $b_i$ and corresponding $\psi_i$ are direct extensions of the 1D case in \cite{wang2024SICON,wang2023constructive} and the 2D case in \cite{meng2022boundary,wang2025sampled}. It is easy to check that the corresponding ${\bf C}_0$ and ${\bf B}_0 $ satisfy \eqref{eq20dy}.
The observer gain {$L_0$} is found from \eqref{observergainfulldy} 
and given by 
\begin{equation*}
	L_0={\scriptsize \left[ \begin{array}{cccc}
		62.4643   &      0  &       0 \\
         0 &   6.7736   &      0 \\
          0 &        0  &  6.7736 \\
   -2.5743  &       0   &      0
		\end{array}  \right]}.
\end{equation*}
We set $\sigma_f=0.1$ and verify LMIs \eqref{eq4343} and \eqref{eq44} for finding maximum admissible $\sigma_g$ that guarantees the feasibility. 
The results are presented in Table \ref{table2}. From Table \ref{table2}, we see that choosing $a^*_{3}=0.8a_3$ (where $\psi_3$ is distributed over a subdomain of $\mathcal{O}$) allows for noise intensities more than 3 times larger than the case of $a^*_{3}=a_3$ (where $\psi_3$ is  distributed over the entire $\mathcal{O}$).  
\begin{table*}[!htb]
    \caption{3D case: $\sigma_g^{\max}$ (maximum admissible $\sigma_g$) for $\sigma_f=0.1$: $a^*_3=0.8a_3$ VS  $a^*_3=a_3$.}\label{table2} 
    \centering
 \begin{tabular}{c|cccccccccccccc} \hline
 \centering
 $N$  & 11& 13& 15& 17 & 19 &21 & 23    \\ \cline{1-8}
$\sigma^{\max}_g$ ($a^*_3=0.8a_3$) & 0.142 & 0.142 & 0.149 & 0.162  & 0.165 & 0.165  & 0.171  \\  \hline
$\sigma^{\max}_g$ ($a^*_3=a_3$) & 0.016 & 0.016 & 0.028 & 0.043  & 0.047 & 0.047  & 0.053  \\
\hline
\end{tabular}
\end{table*}

\section{Conclusions}\label{s5}
In this paper, we considered the finite-dimensional boundary observer-based control of high-dimensional stochastic semilinear parabolic PDEs under boundary measurement. 
Improvements and extension of the results to various high-dimensional PDEs may be topics for future research.



\printcredits

\bibliographystyle{cas-model2-names}

\bibliography{cas-refs}



\end{document}